\newif\ifjVer
\newif\ifC
\algnewcommand\algorithmicswitch{\textbf{switch}}
\algnewcommand\algorithmiccase{\textbf{case}}
\def\s#1{\mbox{\boldmath $#1$}}
\def\itbf#1{\textit{\textbf{#1}}}
\def\RLE{{\texttt R}{\texttt L}{\texttt E}} 
\newcommand{\dd}{\mathinner{\ldotp\ldotp}}
\newtheorem{observation}{Observation}
\newtheorem{fact}{Fact}
\newtheorem{myclaim}{Claim}
\begin{document}

\algdef{SE}[SWITCH]{Switch}{EndSwitch}[1]{\algorithmicswitch\ #1\ \algorithmicdo}{\algorithmicend\ \algorithmicswitch}%
\algdef{SE}[CASE]{Case}{EndCase}[1]{\algorithmiccase\ #1}{\algorithmicend\ \algorithmiccase}%
\title{Lyndon Border Arrays and Lyndon Suffix Arrays}
\title{Linear Algorithms for Computing the \emph{Lyndon Border Array} and
the \emph{Lyndon Suffix Array}}
\titlerunning{\textit{Lyndon Border Arrays and Lyndon Suffix Arrays}}

\author[Alatabbi\textit{et~al}]{
\textbf{
Ali Alatabbi\,$^{1}$,
Jacqueline W.\ Daykin\,$^{1,2}$,
and
M. Sohel Rahman\,$^{3}$
}
}

\institute{
\small
$^{1}$Department of Informatics, King's College London, London, UK.\\
$^{2}$ Royal Holloway, University of London, UK,\\
$^{3}$A$\s{\ell}$EDA Group, Department of CSE, BUET, Dhaka-1000, Bangladesh
}

\authorrunning{\textit{Alatabbi, Daykin and  Rahman.}}
\maketitle

\begin{abstract}
\ifjVer 
We consider the problem of finding repetitive structures and inherent
patterns in a given string $\s{s}$ of length $n$ over a finite totally ordered alphabet.
A border $\s{u}$ of a string $\s{s}$ is both a prefix and a suffix of
$\s{s}$ such that $\s{u} \not= \s{s}$.
The computation of the border array of a string $\s{s}$, namely the borders of each prefix
of $\s{s}$, is strongly related to the string matching problem: given a string
$\s{w}$, find all of its occurrences in $\s{s}$.

A {\itshape Lyndon word} is a primitive word (i.e., it is not a power of another word) which is minimal for the lexicographical order of its conjugacy class (i.e., the set of words obtained by cyclic rotations of the letters).

In this paper we combine these concepts to introduce the \emph{Lyndon
Border Array} $\mathcal L \beta$ of $\s{s}$, whose $i$-th entry $\mathcal L
\beta(\s{s})[i]$ is the length of the longest border of $\s{s}[1 \dd i]$
which is also a Lyndon word. We propose linear-time and linear-space
algorithms \footnote{
The algorithms presented in this paper can be applied to computing the \emph{co-Lyndon Border
Array} by observing that a word $\s{u} = u_1 u_2 \cdots u_n$ is a co-Lyndon word if the reversed word $\s{u} = u_n u_{n-1} \cdots u_1$ is a Lyndon word.
}
for computing $\mathcal L \beta (\s{s})$.
Further, we introduce the \emph{Lyndon Suffix Array}, and by modifying the
efficient suffix array technique of Ko and Aluru \cite{KA03} outline a
linear time and space algorithm for its construction.
\fi

\ifC
In this paper, we combine the well-known concepts of Lyndon words, borders
and suffix arrays to introduce the Lyndon Border Array and the Lyndon
Suffix Array. We present linear time and space algorithms for computing
these two interesting data structures.
\fi
\end{abstract}
\section{Introduction}\label{lba:sec:intro}
Understanding complex patterns and repetitive structures in strings is essential for efficiently solving many problems in stringology \cite{CR02}.
For instance, Lyndon words are increasingly a fundamental and applicable form in the study of combinatorics on words \cite{L83}, \cite{L05}, \cite{S03} - these patterned words have deep links with algebra and are rich in structural properties. Another important concept is a border $\s{u}$ of a string $\s{s}$ defined to be both a prefix and a suffix of $\s{s}$ such that $\s{u} \not= \s{s}$.
The computation of the border array of a string $\s{s}$, that is of the borders of each prefix
of $\s{s}$, is strongly related to the string matching problem: given a string
$\s{w}$, find all of its occurrences in a string $\s{s}$.
It constitutes the ``failure function'' of the Morris-Pratt (1970) string matching algorithm \cite{MORRIS-PRATT1970}.

Lyndon words were introduced under the name of \emph{standard lexicographic
sequences} \cite{Lyn54,Lyn55} in order to construct a basis of a free
Abelian group. Two strings are \emph{conjugate} if they differ only by a
cyclic permutation of their characters; a \emph{Lyndon word} is defined as
a (generally) finite word which is strictly minimal for the lexicographic
order of its conjugacy class.  For a non-letter Lyndon word \s{w}, the pair
(\s{u}, \s{v}) of Lyndon words such that \s{w} = \s{uv} with \s{v} of
maximal length is called the \textit{standard factorization} of $\s{w}$.

The set of Lyndon words permits the unique maximal factorization of any
given string \cite{CFL58,L83}. In 1983, Duval \cite{Du83} developed an
algorithm for standard factorization that runs in linear time and space --
the algorithm cleverly iterates over a string trying to find the longest
Lyndon word; when it finds one, it adds it to the result list and proceeds
to search in the remaining part of the string.


Lyndon words proved to be useful for constructing bases in free Lie
algebras \cite{R93}, constructing de Bruijn sequences \cite{Necklaces78},
computing the lexicographically smallest or largest substring in a string
\cite{ApostolicoC95}, succinct suffix-prefix matching of highly periodic
strings \cite{NeuburgerS13}.
Wider ranging applications include the Burrows-Wheeler transform and data
compression \cite{GS-09}, musicology \cite{C04}, bioinformatics
\cite{DR-04}, and in relation to cryptanalysis \cite{P05}.
Indeed the uses, and hence importance, of Lyndon words are increasing, and
so we are motivated to investigate specialized Lyndon data structures.\\

The key contributions of this paper are as follows.
\begin{itemize}
\item By combining the important concepts of Lyndon words and borders of strings, we introduce here the \emph{Lyndon Border Array} $\mathcal L \beta$ of $\s{s}$, whose $i$-th entry $\mathcal L \beta(\s{s})[i]$ is the length of the longest border of $\s{s}[1 \dd i]$ which is also a Lyndon word.
We present an efficient linear time and space algorithm for computing the \emph{Lyndon Border Array} $\mathcal L \beta$ for a given string (Section~\ref{lba:sec:algorithm}).

\item In order to achieve the desired level of efficiency in the Lyndon Border Array construction we also present some interesting results related to Lyndon combinatorics, which we believe is of independent interest as well (Section~\ref{lba:sec:combinatorics}).

\item A complementary data structure, the \emph{Lyndon Suffix Array}, which is an adaptation of the classic suffix array, is also defined; by modifying the linear-time construction of Ko and Aluru \cite{KA03} we similarly achieve a linear construction for our Lyndon variant (Section~\ref{lba:sec:suffix}). We also present a simpler algorithm to construct a Lyndon Suffix Array from a given Suffix Array (Section~\ref{lba:subsec:simpleLSA}). The latter algorithm also runs in linear time and space.
\end{itemize}


\section{Basic Definitions and Notation}\label{lba:sec:preliminaries}
Consider a finite totally ordered alphabet $\Sigma$ which consists of a set of characters (equivalently letters or symbols).
The cardinality of the alphabet is denoted by $|\Sigma|$. 

A string (word) is a sequence of zero or more characters over an alphabet
$\Sigma$.
A string $\s{s}$ of length $|\s{s}|=n$ is represented by $\s{s}[1 \dd n]$, where $\s{s}[i] \in \Sigma $ for $ 1\leq i \leq n $.
The set of all non-empty strings over the alphabet $\Sigma$ is denoted by $\Sigma^+$, and the set of strings of length $n$ by $\Sigma^n$.
The empty string is the empty sequence of characters (with zero length) denoted by $\epsilon$, with
$\Sigma^* = \Sigma^+ \cup \epsilon$; we write all strings in mathbold: \s{u}, \s{v}, and so on.

The $i$-th symbol of a string $\s{s}$ is denoted by $\s{s}[i]$, or simply $s_i$. We denote by $\s{s}[i \dd j]$, or $s_i \cdots s_j$, the substring of $\s{s}$ that starts at position $i$ and ends at position $j$.

A string $\s{w}$ is a substring, or factor, of $\s{s}$ if $\s{s} = \s{u}\s{w}\s{v}$, where $\s{u},\s{v} \in \Sigma^{\ast}$; specifically, a string $\s{w} = w_1 \cdots w_m$ is a substring of $\s{s} = s_1 \cdots s_n$ if $w_1 \cdots w_m = s_i \cdots s_{i+m-1}$ for some $i$.
Words $\s{w}[1 \dd i]$ are called prefixes of $\s{w}$, and words
$\s{w}[i \dd n]$ are called suffixes of $\s{w}$. The prefix $\s{u}$
(respectively suffix $\s{v}$) is a proper prefix (respectively  suffix) of
a word $\s{w}=\s{u}\s{v}$ if $\s{w}\not=\s{u},\s{v}$.

For a substring $\s{w}$ of $\s{s}$, the string $\s{u}\s{w}\s{v}$ for $\s{u}, \s{v} \in \Sigma^*$ is an extension of $\s{w}$ in $\s{s}$ if $\s{u}\s{w}\s{v}$ is a substring of $\s{s}$; $\s{w}\s{v}$ for $\s{v} \in \Sigma^*$ is the right extension of $\s{w}$ in $\s{s}$ if $\s{w}\s{v}$ is a substring of $\s{s}$; $\s{u}\s{w}$ for $\s{u} \in \Sigma^*$ is a left extension of $\s{w}$ in $\s{s}$ if $\s{u}\s{w}$ is a substring of $\s{s}$. Words that are both prefixes and suffixes of $\s{w}$ are called borders of $\s{w}$. By $border(\s{w})$ we denote the length of the longest border of $\s{w}$ that is shorter than $\s{w}$.


A word $\s{w}$ is periodic if it can be expressed as $\s{w}=\s{p}^k\s{p}^\prime$ where $\s{p}^\prime$ is a proper prefix of $p$, and $k \geq 2$.
Moreover, a string is said to be primitive if it cannot be be written as $\s{u}^k$ with $\s{u} \in \Sigma^+$ and $k \geq 2$, i.e., it is not a power of another string. When $\s{p}$ is primitive,
we call it ``the period" of $\s{u}$. It is a known fact \cite{DBLP:books/daglib/0020103} that, for any string $\s{w}$, $per(\s{w}) + border(\s{w}) = |\s{w}|$, where the period {\emph per} of a nonempty
string is the smallest of its periods.

\begin{definition}{(Border array)}
For a string $\s{s} \in \Sigma^n$, the border array $\beta(\s{s}) [1 \dd n]$ is defined by
$\beta(\s{s})[i] = |border(\s{s}[1 \dd i])|$ for $1 \leq i \leq n$.
\end{definition}

\begin{proposition}\label{prop:border-array}
\cite{MORRIS-PRATT1970} The border of a string $\s{s}$ (or the table
$\beta(\s{s})$ itself) can be computed in time $O(|\s{s}|)$.
\end{proposition}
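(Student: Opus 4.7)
The plan is to compute the border array incrementally from left to right, setting $\beta(\s{s})[1] = 0$ and deriving each $\beta(\s{s})[i]$ for $i \geq 2$ from previously computed values. The crucial observation is that $\s{s}[1 \dd \ell]$ is a border of $\s{s}[1 \dd i]$ of length $\ell \geq 1$ if and only if $\s{s}[1 \dd \ell-1]$ is a border of $\s{s}[1 \dd i-1]$ and $\s{s}[\ell] = \s{s}[i]$. Hence the candidate border lengths for position $i$ are exactly the border lengths of $\s{s}[1 \dd i-1]$ extended by one position, and these border lengths form the decreasing chain $\beta(\s{s})[i-1],\ \beta(\s{s})[\beta(\s{s})[i-1]],\ \ldots,\ 0$ obtained by iterating $\beta$.

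First, I would initialize $\beta(\s{s})[1] \leftarrow 0$ together with a running variable $j \leftarrow 0$. Then, for each $i = 2, \ldots, n$, I would repeatedly update $j \leftarrow \beta(\s{s})[j]$ while $j > 0$ and $\s{s}[j+1] \neq \s{s}[i]$; afterwards, if $\s{s}[j+1] = \s{s}[i]$ then set $j \leftarrow j+1$, and finally record $\beta(\s{s})[i] \leftarrow j$. Correctness follows directly from the observation above: the largest $\ell$ for which $\s{s}[1 \dd \ell]$ is a border of $\s{s}[1 \dd i]$ corresponds to the largest border length of $\s{s}[1 \dd i-1]$ along the chain whose next character agrees with $\s{s}[i]$.

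The main obstacle is the time complexity, since a naive count of the inner loop would suggest $\Theta(n^2)$ in the worst case. The standard remedy is an amortized (potential) argument: take the value of $j$ itself as the potential. Each outer iteration $i$ increases $j$ by at most $1$, contributing at most $n-1$ total increases over the whole computation. Each inner iteration strictly decreases $j$ by at least $1$ (since $\beta(\s{s})[j] < j$ for $j > 0$). Hence the total number of inner-loop iterations, summed across all $i$, is bounded by the total increase in $j$, which is $O(n)$. This gives an overall running time of $O(|\s{s}|)$, and the space used is clearly linear as well, completing the proof.
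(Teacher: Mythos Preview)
Your argument is correct and is precisely the classical Morris--Pratt failure-function computation with the standard amortized analysis via the potential $j$. Note, however, that the paper itself gives no proof of this proposition: it is stated as a known result with a citation to \cite{MORRIS-PRATT1970}, so there is nothing to compare your approach against beyond confirming that you have reproduced the textbook proof faithfully.
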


A string $\s{y}=\s{y}[1 \dd n]$ is a \emph{conjugate} (or cyclic rotation) of $\s{x}=\s{x}[1 \dd n]$ if $\s{y}[1 \dd n] = \s{x}[i \dd n]\s{x}[1 \dd i-1]$ for some $1 \leq i \leq n$ (for $i =1, ~\s{y} =\s{x}$).
A {\itshape Lyndon word} is a primitive word
which is minimal for the lexicographical order of its conjugacy class
(i.e., the set of all words obtained by cyclic rotations of letters).
Furthermore, a non-empty word is a Lyndon word if and only if it is
strictly smaller in lexicographical order (lexorder) than any of its
non-empty proper suffixes \cite{Du83,L83}.

Throughout this paper,
$\mathcal{L}$  will denote the set of Lyndon words over the totally ordered alphabet $\Sigma$, $\mathcal L_n$ will denote the set of Lyndon words of length $n$; hence $\mathcal L = \{\mathcal L_1 \cup \mathcal L_2 \cup \mathcal L_3,\dd \}$. We next list several
well-known properties of Lyndon words and border arrays which we later apply to develop the new algorithms.

\begin{proposition}
\cite{Du83} A word $\s{w} \in \Sigma^+$ is a Lyndon word if and only if either $\s{w} \in \Sigma$ or $\s{w} = \s{u}\s{v}$ with $\s{u}$, $\s{v} \in \mathcal{L}$, $\s{u} < \s{v}$.
\end{proposition}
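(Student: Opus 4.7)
My plan is to use the equivalent characterization of Lyndon words stated just above the proposition: $\s{w}$ is Lyndon iff $\s{w}$ is strictly smaller (in lex order) than each of its non-empty proper suffixes. This characterization automatically implies primitivity, so I only need to argue about the suffix-comparison condition.

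For the easy direction ($\Leftarrow$), assume $\s{w} = \s{u}\s{v}$ with $\s{u}, \s{v} \in \mathcal{L}$ and $\s{u} < \s{v}$. I partition the non-empty proper suffixes of $\s{w}$ into three classes: (i) strings of the form $\s{x}\s{v}$ where $\s{x}$ is a proper non-empty suffix of $\s{u}$; (ii) $\s{v}$ itself; and (iii) proper non-empty suffixes $\s{y}$ of $\s{v}$. For class (i) I use that $\s{u}$ is Lyndon, giving $\s{u} < \s{x}$, together with the well-known fact that a Lyndon word has no non-empty proper border; hence $\s{x}$ cannot be a prefix of $\s{u}$, so the first mismatch between $\s{u}$ and $\s{x}$ propagates to give $\s{u}\s{v} < \s{x}\s{v}$. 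For class (ii) I compare $\s{u}\s{v}$ with $\s{v}$ by cases on whether $\s{u}$ is a prefix of $\s{v}$: if not, the hypothesis $\s{u} < \s{v}$ already yields $\s{u}\s{v} < \s{v}$ via the first mismatch; if yes, I write $\s{v} = \s{u}\s{y}$ and invoke the Lyndon property of $\s{v}$ to obtain $\s{v} < \s{y}$, whence $\s{u}\s{v} < \s{u}\s{y} = \s{v}$. Class (iii) then follows by transitivity from $\s{u}\s{v} < \s{v}$ and $\s{v} < \s{y}$.

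For the harder direction ($\Rightarrow$), given a Lyndon word $\s{w}$ with $|\s{w}| \geq 2$, I take $\s{v}$ to be the lexicographically smallest non-empty proper suffix of $\s{w}$, and let $\s{u}$ be defined by $\s{w} = \s{u}\s{v}$. That $\s{v}$ is Lyndon is immediate: any proper non-empty suffix $\s{y}$ of $\s{v}$ is also a proper non-empty suffix of $\s{w}$ with $\s{y} \neq \s{v}$, so minimality forces $\s{v} < \s{y}$. For $\s{u} < \s{v}$, I start from $\s{u}\s{v} < \s{v}$ (which holds since $\s{v}$ is a proper non-empty suffix of the Lyndon word $\s{w}$) and inspect the position of the first mismatch: either it lies inside $\s{u}$, giving $\s{u} < \s{v}$ directly, or $\s{u}$ is a prefix of $\s{v}$, giving $\s{u} < \s{v}$ by length.

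The main obstacle is showing that $\s{u}$ itself is Lyndon. I argue by contradiction: suppose some proper non-empty suffix $\s{x}$ of $\s{u}$ satisfies $\s{u} > \s{x}$. Applying the Lyndon property of $\s{w}$ to the proper suffix $\s{x}\s{v}$ of $\s{w}$ yields $\s{u}\s{v} < \s{x}\s{v}$. Splitting on the first mismatch position: if it lies within the first $|\s{x}|$ positions I immediately conclude $\s{u} < \s{x}$, contradicting the assumption; otherwise $\s{x}$ must be a prefix of $\s{u}$, making $\s{x}$ a border of $\s{u}$. Writing $\s{u} = \s{x}\s{z}$ and cancelling the common prefix $\s{x}$ in $\s{u}\s{v} < \s{x}\s{v}$ gives $\s{z}\s{v} < \s{v}$. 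But $\s{z}\s{v}$ is itself a non-empty proper suffix of $\s{w}$, contradicting the minimality of $\s{v}$. This last sub-case is where choosing $\s{v}$ as the \emph{lex-smallest} proper suffix (rather than merely some Lyndon proper suffix) is essential.
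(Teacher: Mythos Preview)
The paper does not actually supply a proof of this proposition: it is quoted verbatim as a classical result from \cite{Du83} and immediately followed by Theorem~\ref{lLyndon-thrm}, with no argument in between. So there is nothing in the paper to compare your attempt against.

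That said, your argument is correct and is essentially the standard proof one finds in Duval's paper or in Lothaire. A couple of minor remarks. In the ($\Leftarrow$) direction, class~(i), you implicitly use that $|\s{x}| < |\s{u}|$ to exclude the possibility that $\s{u}$ is a prefix of $\s{x}$; together with border-freeness of $\s{u}$ (ruling out $\s{x}$ a prefix of $\s{u}$) this forces a genuine mismatch inside the first $|\s{x}|$ positions, as you claim. In the ($\Rightarrow$) direction, your choice of $\s{v}$ as the \emph{lex-smallest} proper non-empty suffix is exactly the classical one (it coincides with the longest proper Lyndon suffix, i.e.\ the second component of the standard factorization), and as you observe it is precisely this minimality that closes the final sub-case $\s{z}\s{v} < \s{v}$. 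The only tacit step worth making explicit is that $\s{z}$, defined by $\s{u} = \s{x}\s{z}$, is indeed a non-empty suffix of $\s{u}$ (so that $\s{z}\s{v}$ is a proper non-empty suffix of $\s{w}$), but this is immediate.
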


\begin{theorem}\label{lLyndon-thrm}
\cite{CFL58} Any word $\s{w}$ can be written uniquely as a non-increasing product $\s{w} = \s{u}_1 \s{u}_2 \cdots \s{u}_k$ of Lyndon words.
\end{theorem}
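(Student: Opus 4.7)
The plan is to prove existence and uniqueness of the non-increasing Lyndon factorization separately, in both cases leveraging the recursive characterization of Lyndon words from the preceding proposition: a word is Lyndon exactly when it is a single letter or a concatenation $\s{u}\s{v}$ with $\s{u}, \s{v} \in \mathcal L$ and $\s{u} < \s{v}$.

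For existence, I would start from the trivial factorization $\s{w} = \s{w}[1] \cdot \s{w}[2] \cdots \s{w}[n]$ into single letters, each of which is a Lyndon word, and iteratively merge adjacent factors. Whenever the current decomposition contains a pair $\s{u}_i, \s{u}_{i+1}$ with $\s{u}_i < \s{u}_{i+1}$, the preceding proposition guarantees that $\s{u}_i \s{u}_{i+1}$ is itself Lyndon, so the pair can be replaced by a single Lyndon factor. Each merge strictly reduces the number of factors, so the procedure terminates, and the terminal factorization is by construction non-increasing, providing a decomposition of the required form.

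For uniqueness I would induct on $|\s{w}|$. The central claim is that in any non-increasing Lyndon factorization $\s{w} = \s{u}_1 \s{u}_2 \cdots \s{u}_k$, the first factor $\s{u}_1$ coincides with the longest Lyndon prefix of $\s{w}$, an intrinsic invariant of $\s{w}$ alone. Granted this, $\s{u}_1$ is uniquely determined, and applying the inductive hypothesis to the strictly shorter word $\s{u}_2 \cdots \s{u}_k$ (which inherits its own non-increasing Lyndon factorization) fixes the remaining factors.

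Proving this central claim is the main obstacle and is where the lexicographic bookkeeping gets delicate. My plan is to suppose for contradiction that some Lyndon prefix $\s{p}$ of $\s{w}$ satisfies $|\s{p}| > |\s{u}_1|$, write $\s{p} = \s{u}_1 \s{u}_2 \cdots \s{u}_{j-1} \s{t}$ where $\s{t}$ is a nonempty prefix of $\s{u}_j$ for some $j \geq 2$, and contradict the defining property that $\s{p}$ is strictly less than each of its proper suffixes, in particular that $\s{p} < \s{t}$. To bridge the non-increasing hypothesis $\s{u}_1 \geq \cdots \geq \s{u}_j$ with the required lex relation, I would first prove as a lemma that for Lyndon words $\s{u} \geq \s{v}$ one has $\s{u}\s{v} > \s{v}$ (by a small case split on whether $\s{v}$ is a proper prefix of $\s{u}$ or they first disagree at some interior position, using that $\s{u} \geq \s{v}$ rules out $\s{u}$ being a proper prefix of $\s{v}$ since then $\s{u} < \s{v}$ would hold), and then iterate the lemma across the factors $\s{u}_1, \ldots, \s{u}_{j-1}$ to derive $\s{p} > \s{u}_j \geq \s{t}$, contradicting $\s{p} < \s{t}$ and completing the argument.
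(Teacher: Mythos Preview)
The paper does not actually prove this theorem: it is quoted as the classical Chen--Fox--Lyndon result with citation \cite{CFL58} and no argument is supplied. So there is nothing to compare against on the paper's side.

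Your proposal is essentially the standard textbook proof (as in Lothaire or Duval) and is correct in outline. One small remark on the uniqueness half: the ``iterate the lemma'' step to obtain $\s{p} > \s{u}_j$ is a bit indirect, because after one merge the product $\s{u}_i\s{u}_{i+1}$ need not be Lyndon (e.g.\ when $\s{u}_i=\s{u}_{i+1}$), so the lemma does not re-apply verbatim. The cleaner route, which avoids the lemma entirely, is the direct chain of inequalities
\[
\s{p} \;<\; \s{t} \;\le\; \s{u}_j \;\le\; \s{u}_1 \;\le\; \s{p},
\]
where $\s{p}<\s{t}$ comes from $\s{t}$ being a proper nonempty suffix of the Lyndon word $\s{p}$, $\s{t}\le\s{u}_j$ because $\s{t}$ is a prefix of $\s{u}_j$, $\s{u}_j\le\s{u}_1$ by the non-increasing hypothesis, and $\s{u}_1\le\s{p}$ because $\s{u}_1$ is a prefix of $\s{p}$. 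This yields the contradiction immediately and sidesteps the iteration issue. Your existence argument via repeated merging of increasing adjacent Lyndon factors is fine as stated.
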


Theorem \ref{lLyndon-thrm} shows that there is a unique decomposition of any word into non-increasing Lyndon words $(\s{u}_1 \ge \s{u}_2 \ge \cdots \ge \s{u}_k)$.

\begin{observation}\label{obs:Lyndon-Word}
Let $\s{\ell}$  be a Lyndon word ($\s{\ell} \in \mathcal{L}$) where $\s{\ell} = \ell_1 \ell_2 \dd  \ell_n$ (to avoid trivialities we assume $n>1$), then
\begin{enumerate}[(1)]
 \item $border(\s{\ell}) =0$, \label{obs:Lyndon-Word:primitive}
 \item $\ell_1 < \ell_n$, \label{obs:Lyndon-Word:unit-factor}
 \item $\ell_1 \leq \ell_i|\ell_i \in \{\ell_2, \dd , \ell_{n-1}\}$, \label{obs:Lyndon-Word:prefix}
 \item $\s{\ell} < \ell_i \cdots \ell_n$, for $ 1 < i \le n$.\label{obs:Lyndon-Word:suffix}
\end{enumerate}
\end{observation}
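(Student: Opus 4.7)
The plan is to treat item (4) as essentially immediate from the suffix characterization of Lyndon words already quoted in Section~\ref{lba:sec:preliminaries} — namely, that a non-empty word is Lyndon if and only if it is strictly smaller in lex order than each of its non-empty proper suffixes — and then to derive (1), (2), (3) as short corollaries of (4). So I would first dispatch (4) in one line by invoking this characterization, and then use it as the single engine driving the other three items.

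For (1), I would argue by contradiction. Suppose $\s{\ell}$ has a non-empty proper border of length $k$ with $1 \le k < n$, so that $\s{\ell}[1 \dd k] = \s{\ell}[n{-}k{+}1 \dd n]$. Since $\s{\ell}[1 \dd k]$ is a proper prefix of $\s{\ell}$, it is lex-smaller than $\s{\ell}$; by the border equality, so is the suffix $\s{\ell}[n{-}k{+}1 \dd n]$. This directly contradicts (4), hence $border(\s{\ell}) = 0$.

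For (2), I would set $i = n$ in (4) to obtain $\s{\ell} < \ell_n$ and then rule out the two non-target cases on the first character: if $\ell_1 > \ell_n$ then $\s{\ell} > \ell_n$, a contradiction; if $\ell_1 = \ell_n$ then (using $n > 1$) the single letter $\ell_n$ is a proper prefix of $\s{\ell}$, hence $\ell_n < \s{\ell}$, again contradicting (4). Thus $\ell_1 < \ell_n$. For (3), I would fix $1 < i \le n{-}1$ and apply (4) to the suffix $\ell_i\cdots\ell_n$, getting $\s{\ell} < \ell_i\cdots\ell_n$; comparing first characters forces $\ell_1 \le \ell_i$, since $\ell_1 > \ell_i$ would immediately contradict the inequality.

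There is no real obstacle: the suffix characterization of Lyndon words does all the combinatorial work, and each item reduces to a one-step lexicographic comparison. The only subtle point worth flagging in the write-up is the asymmetric treatment of the first-letter-equality case: in (2) equality is forbidden because the length-one suffix $\ell_n$ would become a proper prefix of $\s{\ell}$ and hence be lex-smaller, whereas in (3) the longer suffix $\ell_i\cdots\ell_n$ allows equality at the first position without immediately producing a contradiction, which is precisely why (3) gives $\le$ rather than $<$.
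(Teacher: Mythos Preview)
Your argument is correct. Taking the suffix characterization of Lyndon words (already recalled in Section~\ref{lba:sec:preliminaries}) as the starting point, item~(4) is indeed immediate, and your derivations of (1), (2), (3) from it are sound. The one place worth a second look is (1): you use that a proper prefix of $\s{\ell}$ is strictly lex-smaller than $\s{\ell}$; this is of course the standard convention for lexorder (a proper prefix precedes the word), but it is worth stating explicitly since the contradiction hinges on it.

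As for comparison with the paper: there is nothing to compare. The paper presents Observation~\ref{obs:Lyndon-Word} as a list of ``well-known properties of Lyndon words'' and gives no proof; it simply cites \cite{Du83,L83} for the suffix characterization and moves on. Your write-up therefore supplies what the paper omits, and does so in the natural way --- reducing everything to the suffix characterization, which is exactly the tool the paper itself singles out just before stating the observation.
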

\begin{observation}\label{obs:Border-Array}
Given a string $\s{s}$, then
\begin{enumerate}[(1)]
 \item $\beta(\s{s})[1] =0$,  \label{obs:first_element}
 \item if $\s{b}$ is a border of $\s{s}$, and $\s{b}^\prime$ is a border of $\s{b}$, then $\s{b}^\prime$ is a border of $\s{s}$,
 \item $0 \leq \beta (\s{s}) [i+1] \leq \beta(\s{s})[i] +1$, for $ 1 \le i < n$.
\end{enumerate}
\end{observation}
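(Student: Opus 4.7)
The plan is to treat each of the three parts of the observation separately, since they are essentially standard facts about borders; the proofs are short and mostly rely on unwrapping the definition of a border as a string that is simultaneously a proper prefix and a proper suffix.

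For part~(1), I would argue that $\s{s}[1\dd 1]$ is a single character, so its only proper prefix (equivalently, proper suffix) is $\epsilon$, which has length $0$. Hence $\beta(\s{s})[1]=0$ by definition.

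For part~(2), the approach is transitivity. If $\s{b}$ is a border of $\s{s}$, then $\s{b}$ is both a prefix and a suffix of $\s{s}$. If $\s{b}'$ is a border of $\s{b}$, then $\s{b}'$ is both a prefix and a suffix of $\s{b}$. A prefix of a prefix of $\s{s}$ is a prefix of $\s{s}$, and symmetrically for suffixes; since $|\s{b}'|<|\s{b}|<|\s{s}|$, we have $\s{b}'\neq\s{s}$, so $\s{b}'$ is a (proper) border of $\s{s}$.

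Part~(3) is the main content. The lower bound holds trivially because border lengths are non-negative integers by definition. For the upper bound, let $k=\beta(\s{s})[i+1]$. If $k=0$ the inequality is immediate, so assume $k\ge 1$. Then $\s{s}[1\dd k]$ is both a prefix and a suffix of $\s{s}[1\dd i+1]$, which means $\s{s}[1\dd k]=\s{s}[i+2-k\dd i+1]$. Dropping the last character on both sides yields $\s{s}[1\dd k-1]=\s{s}[i+2-k\dd i]$, showing that $\s{s}[1\dd k-1]$ is a suffix of $\s{s}[1\dd i]$; it is trivially a prefix of $\s{s}[1\dd i]$, and it is strictly shorter than $\s{s}[1\dd i]$ since $k-1<k\le i+1-1=i$, so it is a border of $\s{s}[1\dd i]$. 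Therefore $\beta(\s{s})[i]\ge k-1$, i.e., $\beta(\s{s})[i+1]\le \beta(\s{s})[i]+1$. I do not expect any real obstacle here; the only mild subtlety is the case analysis $k=0$ versus $k\ge 1$, which must be handled to justify removing the last character.
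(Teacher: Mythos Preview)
Your proof is correct; each of the three parts is handled by the standard argument and there are no gaps. The paper itself states this observation without proof (it is treated as folklore on border arrays), so there is no authors' argument to compare against---your write-up simply supplies the routine verification that the paper omits.
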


We now introduce the \emph{Lyndon Border Array} and associated computation, illustrated in Example \ref{example01} below.

\begin{definition}\label{definition-LBA}
(Lyndon Border Array)
For a string $\s{s} \in \Sigma^n$, the \emph{Lyndon border array} $\mathcal L \beta(\s{s})[i]$ is the length of the longest border
 of $\s{s}[1 \dd i]$ which is also a Lyndon word.
\end{definition}


\begin{definition}\label{definition-LSA}
(Lyndon suffix array)
For a string $\s{s} \in \Sigma^n$, the \emph{Lyndon Suffix Array} of \s{s} is the lexicographically sorted list of all those suffixes of \s{s} that form Lyndon words.
\end{definition}

\ifC
\begin{definition}\label{definition-LSA}
(Lyndon cover array)
?????????????????????????????????
\end{definition}
\fi

Given a string $\s{s}$ of length $n$, associated computational problems are:
compute the Lyndon border and Lyndon suffix arrays; we address these problems in this paper.

\begin{example}\label{example01}
Consider the string $\s{s}=a b a a b a a a b b a a b a a b$.
The following table illustrate the border array $\beta$ of \s{s},
the Lyndon border array $\mathcal L \beta$ of $\s{s}$,
the suffix array  $\mathcal A$ of $\s{s}$ and
the Lyndon suffix array  $\mathcal L S$ of $\s{s}$.

\addtolength{\textfloatsep}{-7mm}
\setlength{\tabcolsep}{5pt}
\begin{table}[h]
\center 
\begin{tabular}{ r | c c c c c c c c c c c c c c c c}
 
$\scriptstyle i$ & $\scriptstyle  0$& $\scriptstyle 1$& $\scriptstyle 2$&
$\scriptstyle 3$& $\scriptstyle 4$& $\scriptstyle 5$& $\scriptstyle 6$&
$\scriptstyle 7$& $\scriptstyle 8$& $\scriptstyle 9$& $\scriptstyle 10$&
$\scriptstyle 11$ & $\scriptstyle 12$ & $\scriptstyle 13$ & $\scriptstyle
14$ & $\scriptstyle  15$\\
\hline
$\s{s}[i]$ & a &b &a &a &b &a &a &a &b &b &a &a &b &a &a &b \\
$\beta[i]$ & 0 &0 &1 &1 &2 &3 &4 &1 &2 &0 &1 &1 &2 &3 &4 &5 \\
$\mathcal L \beta[i]$ & 0 &0 &1 &1 &2 &1 &1 &1 &2 &0 &1 &1 &2 &1 &1 &2 \\
$\mathcal A [i]$ & 5&13&2&10&6&14&3&11&0&7&15&4&12&1&9&8 \\
$\mathcal LS [i]$ &5& 13& 14& 15 & -1 &-1 &-1 &-1 &-1 &-1 &-1 &-1 &-1 &-1
&-1 &-1  \\
 
\end{tabular}
\hspace{\textwidth}
\label{tab:ex1}
\end{table}
\end{example}



\section{Lyndon Combinatorics}\label{lba:sec:combinatorics}

This section introduces some new interesting combinatorial results on Lyndon words. In relation to the computation of the Lyndon Border Array, we here show how to find the shortest prefix of a string that is both border-free and not a Lyndon word.
So assume that for a given string $\s{s}$ of length $n$, we have $\s{s}[1] = \gamma$.  If $\s{f_1}, \dd , \s{f_q}$ are factors of $\s{s}$, we use $start(\s{f_i})~(end(\s{f_i}))$ to denote the index of $\s{f_i}[1]~(\s{f_i}[|\s{f_i}|])$ in $\s{s}$, and say that $j$ is an index of $\s{f_i}$ if $start(\s{f_i})\leq j \leq end(\s{f_i})$. An outline of the steps of the algorithm is as follows:\\

\noindent Algorithm \emph{Shortest non-Lyndon Border-free Prefix (SNLB$f$P)}.
\begin{enumerate}
\item\label{step:LynFact} Compute the Lyndon factorization of $\s{s}$.
\item\label{step:BinSearch} Apply binary search to find the first Lyndon factor $\s{f_{\mu}}$ in the factorization starting with the largest letter $\mu$ which is strictly less than $\gamma$ (if it exists).
\item\label{step:border} Consider the maximal prefix $\s{p}$ of $\s{s}$ in which every factor $\s{f_1}, \dd , \s{f_q}$ starts with $\gamma$; compute the border array $\beta(\s{p})$ of $\s{p}$.
\item\label{step:computeI} Compute $i$, the smallest index of $\s{p}$, such that $i>end(\s{f_1})$ ($i$ is not an index of $\s{f_1}$) and $\beta(\s{p})[i] = 0$;
\begin{enumerate}
\item if $i$ does not exist then $i = end(\s{f_q}) +1$ (if it exists)
\item if $q = 1$ then $i = end(\s{f_1}) +1$ (if it exists).
\end{enumerate}
\item Return $\s{s}[1 \dd i]$. 
\end{enumerate}

\begin{myclaim}\label{claim:F1}
Suppose $\s{s}[1 \dd i]$ is the shortest prefix of $\s{s}$ that is both
border-free and non-Lyndon. Then $i > end(\s{f_1})$, i.e., 
Algorithm SNLB$f$P is correct in skipping the first Lyndon factor.
\end{myclaim}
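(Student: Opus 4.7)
The plan is to prove the contrapositive in a stronger form: I will show that every border-free prefix $\s{s}[1 \dd i]$ with $i \le end(\s{f_1})$ is actually a Lyndon word, and therefore cannot serve as the shortest border-free non-Lyndon prefix; this forces $i > end(\s{f_1})$.

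Every such prefix is itself a prefix of the Lyndon word $\s{f_1}$, so the claim reduces to the following key lemma: \emph{any border-free prefix of a Lyndon word is itself a Lyndon word}. To prove this I would take $\s{p} = \s{s}[1 \dd i]$ with $i \le |\s{f_1}|$, write $\s{f_1} = \s{p}\s{w}$, and invoke the characterization quoted earlier in the paper, namely that $\s{p}$ is Lyndon iff $\s{p} < \s{p}[j \dd i]$ for every $j$ with $1 < j \le i$.

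Fix such a $j$. Since $\s{f_1}$ is Lyndon, $\s{p}\s{w} < \s{p}[j \dd i]\s{w}$; let $k^{\ast}$ be the first position where these two words disagree. The key observation is that because $\s{p}$ is border-free, its proper suffix $\s{p}[j \dd i]$ cannot coincide with the prefix $\s{p}[1 \dd i-j+1]$, so there is already some position within the common range $[1, i-j+1]$ at which $\s{p}$ and $\s{p}[j \dd i]$ differ. This pins $k^{\ast}$ to that range, strictly before the appended $\s{w}$ on both sides, so the two differing characters are exactly $\s{p}[k^{\ast}]$ and $\s{p}[j + k^{\ast} - 1]$. The Lyndon inequality for $\s{f_1}$ then yields $\s{p}[k^{\ast}] < \s{p}[j + k^{\ast} - 1]$, whence $\s{p} < \s{p}[j \dd i]$ as needed.

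The only subtle point, and the step I expect to be the main obstacle, is ruling out that the first discrepancy between $\s{p}\s{w}$ and $\s{p}[j \dd i]\s{w}$ is pushed past the overlap into the shared tail $\s{w}$; this is exactly where border-freeness of $\s{p}$ is used, since agreement throughout the overlap would make $\s{p}[j \dd i]$ a proper border of $\s{p}$, a contradiction. Once the lemma is in hand, applying it with $\s{p} = \s{s}[1 \dd i]$ for any $i \le end(\s{f_1})$ shows such a prefix is Lyndon, which directly contradicts $\s{s}[1 \dd i]$ being non-Lyndon and establishes the claim.
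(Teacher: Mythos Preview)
Your proposal is correct and follows essentially the same route as the paper: both arguments reduce the claim to showing that every border-free prefix of the Lyndon word $\s{f_1}$ is itself Lyndon, and both do so by comparing a proper suffix of the prefix with the corresponding prefix of $\s{f_1}$, using the Lyndon property of $\s{f_1}$ to obtain $\le$ and border-freeness to upgrade to strict inequality. The paper packages this as an induction over successive border-free positions (with a separate base case of the form $\gamma^{j-1}\nu$), but the inductive hypothesis is never actually invoked in the main step, so your direct formulation is a slightly cleaner version of the same argument.
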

\begin{proof}[Claim~\ref{claim:F1}]
The proof is by induction.
Assume that the first Lyndon factor $\s{f_1}$ is of length $m$ (with $m \ge 2$ otherwise the Claim holds trivially). By Observation \ref{obs:Lyndon-Word}(\ref{obs:Lyndon-Word:primitive}) and Observation \ref{obs:Border-Array}(\ref{obs:first_element}) we have $\beta(\s{f_1})[1] = \beta(\s{f_1})[m] = 0$. The smallest $j$ after 1 where $\beta(\s{f_1})[j] = 0$ must index a Lyndon word \s{x} of the form $\s{x} = \gamma^{j-1}\nu$, where $\nu > \gamma$.\ Hence, after the first letter (which is a Lyndon word), the next border-free prefix is also a Lyndon word.


Assume now that all border-free prefixes up to index $t$ of $\s{f_1}$ are
Lyndon words (and hence nested), and suppose that the next border-free
position is $t'$. Then we need to show that $\s{y} = \s{f_1}[1 \dd t']$
is a Lyndon word; we proceed to show that \s{y} is less than each of its
proper suffixes. Let $\s{w_k} = \s{f_1}[t'-k+1 \dd t']$ for $1 \le k
<t'$. Since \s{f_1} is a Lyndon word and minimal in its conjugacy class,
then $\s{f_1}[1 \dd k] \le \s{w_k}$; further, since $\beta(\s{f_1})[t']
= 0$  we cannot have equality and so $\s{f_1}[1 \dd k] < \s{w_k}$ which
implies that $\s{y} < \s{w_k}$ as required.


In other words, we have shown that any border-free prefix of $\s{f_1}$ is a
Lyndon word and the result follows.

\qed
\end{proof}

\begin{corollary}\label{cor:Lyndonprefix}
Any border-free prefix of a Lyndon word is a Lyndon word.
\end{corollary}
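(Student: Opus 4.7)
The plan is to observe that the proof of Claim~\ref{claim:F1} already establishes this stronger statement as a byproduct, so no new work is needed beyond a clean restatement. In the inductive argument inside Claim~\ref{claim:F1}, the factor $\s{f_1}$ was used solely through the facts that it is a Lyndon word (so $\beta(\s{f_1})[1] = \beta(\s{f_1})[|\s{f_1}|] = 0$ by Observation~\ref{obs:Lyndon-Word}(\ref{obs:Lyndon-Word:primitive}) and Observation~\ref{obs:Border-Array}(\ref{obs:first_element})) and that it is strictly smaller than each of its non-empty proper suffixes. No property specific to being the first factor of a Lyndon factorization was invoked.

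Accordingly, the proof plan is simply to take an arbitrary Lyndon word $\s{\ell}$ in the role of $\s{f_1}$ and re-run the induction from the proof of Claim~\ref{claim:F1} verbatim: the base case handles the trivial length-$1$ prefix, and the inductive step shows that if $t'$ is the next index with $\beta(\s{\ell})[t'] = 0$, then $\s{y} = \s{\ell}[1 \dd t']$ is lexicographically smaller than each of its proper suffixes $\s{w_k} = \s{\ell}[t'-k+1 \dd t']$, because $\s{\ell}[1 \dd k] \le \s{w_k}$ by Lyndon-minimality and the inequality is strict since $\beta(\s{\ell})[t'] = 0$ precludes equality.

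The only mild obstacle is presentational rather than mathematical: we must be careful to state the corollary as an explicit standalone consequence and not merely wave at the preceding proof. Thus I would write the proof as a single sentence citing Claim~\ref{claim:F1}, observing that its induction goes through unchanged for any Lyndon word in place of $\s{f_1}$, and noting the degenerate case where the Lyndon word has length $1$ is immediate since the only border-free prefix is the single-letter word itself, which is a Lyndon word by definition.
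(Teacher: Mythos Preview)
Your proposal is correct and matches the paper's approach exactly: the paper states Corollary~\ref{cor:Lyndonprefix} immediately after Claim~\ref{claim:F1} with no separate proof, since the final sentence of that proof already reads ``we have shown that any border-free prefix of $\s{f_1}$ is a Lyndon word,'' and the argument used only that $\s{f_1}$ is a Lyndon word. Your observation that the role of $\s{f_1}$ can be played by any Lyndon word $\s{\ell}$ is precisely the content of the corollary, so nothing further is required.
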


\begin{lemma}\label{algo:snlpCor}
\emph{Algorithm SNLB$f$P} is correct.
\end{lemma}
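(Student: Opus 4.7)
The plan is to verify that the prefix $\s{s}[1 \dd i]$ returned by the algorithm is border-free and non-Lyndon, and that it is the shortest prefix of $\s{s}$ with these two properties. The argument rests on two preliminary facts. First, by Claim~\ref{claim:F1}, no qualifying prefix has length $\le |\s{f_1}|$, justifying the restriction $i > end(\s{f_1})$. Second, $\s{f_1}$ is actually the longest Lyndon prefix of $\s{s}$: any strictly longer Lyndon prefix could be successively merged with the leading Lyndon factors of $\s{s}[|\s{f_1}|+1 \dd n]$ (using that $\s{u}\s{v}$ is Lyndon whenever $\s{u}, \s{v}$ are Lyndon with $\s{u} < \s{v}$) to produce a Lyndon factorization of $\s{s}$ whose first factor properly exceeds $\s{f_1}$, contradicting the uniqueness in Theorem~\ref{lLyndon-thrm}. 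Consequently every prefix of $\s{s}$ of length $> |\s{f_1}|$ is automatically non-Lyndon, so the task reduces to finding the shortest border-free prefix of length $> |\s{f_1}|$.

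A structural observation about $\s{p}$ is also needed: since each factor $\s{f_j}$ ($1 \le j \le q$) is a Lyndon word beginning with $\gamma$, Observation~\ref{obs:Lyndon-Word}(3) yields that every character of $\s{p}$ is $\ge \gamma$.

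I would then split into two cases. If step~4 locates an index $i$ inside $\s{p}$, then $\s{s}[1 \dd i]$ coincides with $\s{p}[1 \dd i]$ and is border-free since $\beta(\s{p})[i] = 0$, non-Lyndon by the length bound $i > |\s{f_1}|$, and shortest by minimality of $i$. Otherwise the algorithm returns $i = end(\s{f_q}) + 1$, so $\s{s}[1 \dd i] = \s{p}\,\mu$ where $\mu < \gamma$ is the first letter of the next factor $\s{f_{q+1}}$. This prefix fails to be a Lyndon word because its first letter $\gamma$ strictly exceeds its last letter $\mu$, violating Observation~\ref{obs:Lyndon-Word}(2); it is border-free because any nontrivial border would be a prefix of $\s{p}$ ending in $\mu$, impossible by the structural observation; and it is shortest because this case's hypothesis asserts that every $k$ with $end(\s{f_1}) < k \le end(\s{f_q})$ satisfies $\beta(\s{p})[k] > 0$, so every shorter candidate inside $\s{p}$ has a nontrivial border. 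The boundary subcase $\s{p} = \s{s}$ is handled uniformly: no $\s{f_{q+1}}$ exists, correctly reflecting that no border-free non-Lyndon prefix of $\s{s}$ exists.

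The main obstacle I anticipate is justifying that $\s{f_1}$ coincides with the longest Lyndon prefix of $\s{s}$: although standard, the merging argument requires a careful appeal to the uniqueness of the Lyndon decomposition and the concatenation criterion for Lyndon words.
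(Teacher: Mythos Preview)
Your argument is correct and complete, but it follows a different line from the paper's proof. The key divergence is in how the non-Lyndon property of the returned prefix is established. The paper argues locally: when the index $i$ returned in Step~4 lies inside some factor $\s{f_t}$ with $t>1$, it compares the length-$k$ prefix $\s{p_1}$ of $\s{f_1}$ with the length-$k$ prefix $\s{p_t}$ of $\s{f_t}$ (where $k = i - start(\s{f_t}) + 1$), uses $\s{f_1} \ge \s{f_t}$ from the factorization together with $\beta(\s{p})[i]=0$ to conclude $\s{p_1} > \s{p_t}$, and infers that $\s{s}[1 \dd i]$ has a proper suffix smaller than itself. You instead invoke the global fact that $\s{f_1}$ is the longest Lyndon prefix of $\s{s}$, so \emph{every} prefix of length exceeding $|\s{f_1}|$ is automatically non-Lyndon; this collapses the whole non-Lyndon verification to a length check. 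Your route is more conceptual and handles both cases (the index $i$ inside $\s{p}$, and $i = end(\s{f_q})+1$) uniformly, at the cost of the extra lemma that $\s{f_1}$ is the longest Lyndon prefix---which, as you note, follows from uniqueness of the Lyndon factorization via the standard merging argument. The paper's approach avoids this lemma but requires a separate ad hoc argument for each case. Both establish border-freeness in Case~2 via essentially the same observation (no letter $<\gamma$ occurs in $\s{p}$).
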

\begin{proof}[Lemma~\ref{algo:snlpCor}]
In Step~\ref{step:BinSearch} we identify the factor $\s{f_{\mu}}$ which starts with the letter $\mu < \gamma$. From Lyndon principles, Observation \ref{obs:Lyndon-Word}(\ref{obs:Lyndon-Word:unit-factor}),(\ref{obs:Lyndon-Word:prefix}),
it follows that no factor to the left of $\s{f_{\mu}}$ contains the letter $\mu$. Hence the prefix $\s{s}[1 \dd start(\s{f_{\mu}})]$ is both border-free and non-Lyndon. However, it may not be the shortest one and so the algorithm continues to check through the prefix $\s{p}$.

Now, consider the index $i$ of $\s{p}$ computed in
Step~\ref{step:computeI}. Suppose that $i$ is an index of the factor
$\s{f_t}$; by Claim \ref{claim:F1} we have $t>1$. Let $k$ be the length of
the prefix $\s{p_t}$ of $\s{f_t}$ that ends at $i$, and $\s{p_1}$ be the
prefix of $\s{f_1}$ of length $k$. By the Lyndon factorization we have
$\s{p_1} \ge \s{p_t}$ (in lexicographic order). If $\s{p_1} = \s{p_t}$ then
this contradicts $\beta(\s{p})[i] = 0$. Hence $\s{p_1} > \s{p_t}$ and so
the prefix $\s{s}[1 \dd i]$ is both border-free and not a Lyndon word.
Hence Algorithm SNLB$f$P correctly returns $\s{s}[1 \dd i]$.

\qed
\end{proof}

\begin{lemma}\label{algo:snlpRT}
\emph{Algorithm SNLB$f$P} runs in $O(n)$ time.
\end{lemma}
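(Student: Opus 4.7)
The plan is to bound the cost of each of the five steps of Algorithm SNLB$f$P in turn and then sum; every step will invoke a primitive already known to run in linear time, so no delicate amortization should be required.

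First I would dispatch Step~\ref{step:LynFact} by calling Duval's algorithm~\cite{Du83}, which produces the Lyndon factorization $\s{f_1}, \dd, \s{f_q}$ in $O(n)$ time and space; in particular $q \le n$ and the factors have total length $n$. For Step~\ref{step:BinSearch}, the key observation is that by Theorem~\ref{lLyndon-thrm} the factors are non-increasing in lexicographic order, so their first letters form a non-increasing sequence. After a single $O(q) = O(n)$ pass to extract these first letters, the factor $\s{f_{\mu}}$---which, by monotonicity, is precisely the first factor whose leading letter falls strictly below $\gamma$---can be located by binary search in a further $O(\log q)$ time.

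Step~\ref{step:border} then splits into two linear passes: walking the factor list to identify the maximal prefix $\s{p}$ all of whose factors begin with $\gamma$ costs $O(q) = O(n)$, while building $\beta(\s{p})$ costs $O(|\s{p}|) = O(n)$ by Proposition~\ref{prop:border-array}. Finally, Step~\ref{step:computeI} scans $\beta(\s{p})$ left to right from position $end(\s{f_1}) + 1$, and Step~5 returns a prefix of $\s{s}$ of known length; both are clearly $O(n)$. Adding the contributions gives the claimed $O(n)$ bound.

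I do not expect a substantive obstacle. The only spot that needs a sentence of justification is Step~\ref{step:BinSearch}, where one must note explicitly that non-increasing order on the Lyndon factors descends to non-increasing order on their first letters before appealing to binary search; once this is recorded, every step is $O(n)$ (or better) and the overall running time is linear in $n$.
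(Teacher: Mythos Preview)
Your proof is correct and follows essentially the same step-by-step cost analysis as the paper's own proof. The only difference is that you supply an extra sentence justifying why binary search is applicable in Step~\ref{step:BinSearch} (via the non-increasing first letters of the factors), whereas the paper simply asserts the $O(\log n)$ bound.
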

\begin{proof}[Lemma~\ref{algo:snlpRT}]
Step~\ref{step:LynFact} can be computed in $O(n)$ time \cite{Du83,L05}. Step~\ref{step:BinSearch} applies an $O(\log n)$ binary search.
 In Step~\ref{step:border} we compute the border array of the prefix $\s{p}$ of $\s{s}$. Clearly Steps \ref{step:border} and \ref{step:computeI} can be completed in $O(n)$ time. Hence, the result follows.

\qed
\end{proof}

\begin{lemma}\label{lemm:Lyndon-invalid-point-primitive}\textbf{(Lyndon invalid
point for border-free word)} Given a string $\s{\ell} \in \Sigma^n$, $n>1$,
such that $border(\s{\ell}) =0$,
if $\s{\ell}$ is not a Lyndon word ($\s{\ell} \not \in \mathcal L$), then all
the right extensions $\s{\ell}\s{\ell}^\prime$ of $\s{\ell}$, such that
$\s{\ell}^\prime \in \Sigma^+$, are not Lyndon words either.

--- We refer to this condition as the $\mathcal{L}$-fail condition and to the
point (index) of where it occurs as the Lyndon invalid point.
\end{lemma}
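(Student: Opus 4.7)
The plan is to exploit the well-known characterization that a word is Lyndon if and only if it is strictly smaller in lex order than each of its non-empty proper suffixes. First I would observe that the hypothesis $border(\s{\ell})=0$ already implies that $\s{\ell}$ is primitive: if $\s{\ell}=\s{u}^k$ for some $k\ge 2$, then $\s{u}^{k-1}$ is both a proper prefix and a proper suffix of $\s{\ell}$, contradicting $border(\s{\ell})=0$. Hence the only way for $\s{\ell}$ to fail to be Lyndon is for it to possess some non-empty proper suffix $\s{s}$ with $\s{s}\le\s{\ell}$ in lexicographic order.

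Next, I would pin down the form of the disagreement between $\s{s}$ and $\s{\ell}$. Since $|\s{s}|<|\s{\ell}|$, we have $\s{s}\ne\s{\ell}$; since $border(\s{\ell})=0$, $\s{s}$ cannot be a prefix of $\s{\ell}$; and $\s{\ell}$ cannot be a prefix of $\s{s}$ because $\s{\ell}$ is the longer string. Therefore $\s{s}$ and $\s{\ell}$ must disagree at some first position $p\le|\s{s}|$, and the inequality $\s{s}\le\s{\ell}$ forces $\s{s}[p]<\s{\ell}[p]$.

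The key step is then to transfer this strict lex inequality to any right extension. Given $\s{\ell}' \in \Sigma^+$, the word $\s{s}\s{\ell}'$ is a non-empty proper suffix of $\s{\ell}\s{\ell}'$ (it is shorter, and begins at position $|\s{\ell}|-|\s{s}|+1$ of $\s{\ell}\s{\ell}'$). Comparing $\s{s}\s{\ell}'$ and $\s{\ell}\s{\ell}'$ character-by-character, positions $1,\dots,p-1$ agree (because $\s{s}$ and $\s{\ell}$ agree there by choice of $p$) and at position $p$ we still have $\s{s}[p]<\s{\ell}[p]$, regardless of what $\s{\ell}'$ is. Thus $\s{s}\s{\ell}' < \s{\ell}\s{\ell}'$, exhibiting a proper suffix of $\s{\ell}\s{\ell}'$ strictly smaller than $\s{\ell}\s{\ell}'$ itself, which by the Lyndon characterization means $\s{\ell}\s{\ell}'\notin\mathcal{L}$.

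The main obstacle is really just the bookkeeping of the two cases that could prevent a strict disagreement position from existing, namely $\s{s}$ being equal to $\s{\ell}$ or being a prefix of $\s{\ell}$; both are ruled out precisely by the hypothesis $border(\s{\ell})=0$ together with $|\s{s}|<|\s{\ell}|$. Once this is observed, the propagation of the witness suffix under right concatenation is immediate, and the lemma follows.
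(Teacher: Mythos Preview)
Your proof is correct and is in fact cleaner than the paper's. The paper argues via conjugacy: it takes the Lyndon rotation $\s{r}=\s{q}\s{p}$ of $\s{\ell}=\s{p}\s{q}$ and then splits into three cases according to whether $|\s{p}|$ equals, is less than, or exceeds $|\s{q}|$, in each case exhibiting a rotation of $\s{\ell}\s{\ell}'$ that is lexicographically smaller (or deriving a border contradiction). You instead use the suffix characterization of Lyndon words directly: from $border(\s{\ell})=0$ you conclude that the witnessing ``bad'' suffix $\s{s}\le\s{\ell}$ cannot be a prefix of $\s{\ell}$, so there is a strict first-mismatch position; this mismatch is preserved under right concatenation by any $\s{\ell}'$, and the same $\s{s}\s{\ell}'$ remains a smaller proper suffix of $\s{\ell}\s{\ell}'$. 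Your route avoids the case split entirely and makes transparent exactly where the hypothesis $border(\s{\ell})=0$ is used (to rule out the prefix alternative in the lex comparison), whereas the paper's argument buries this in the third case. Both approaches are linear-time constructive if one wanted to locate the witness, but yours yields it more directly.
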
 

\begin{proof}[Lemma~\ref{lemm:Lyndon-invalid-point-primitive}]
Since $\s{\ell}$ is border-free ($border(\s{\ell}) =0$) but not a Lyndon word,
then let $\s{r}$ be the rotation of $\s{\ell}$ which is the Lyndon word of the
conjugacy class. Write $\s{\ell} = \s{p}\s{q}$ such that $\s{r} = \s{q}\s{p}$
and $\s{q}\s{p} < \s{p}\s{q}$.

\begin{description}
 \item [Case (1) -] If $|\s{p}| = |\s{q}|$,
 then since $\s{r}$ is border-free and a Lyndon word, $\s{q} \neq \s{p}$.
 Further, since $\s{q}\s{p} < \s{p}\s{q}$, we have $\s{q} < \s{p}$.
 It follows that $\s{q}\s{\s{\ell}'}\s{p} < \s{p}\s{q}\s{\s{\ell}'}$ and so $\s{\ell}\s{\s{\ell}'}$ cannot be a Lyndon word.

 \item[Case (2) -] If $|\s{q}| > |\s{p}|$,
 then $\s{r}[1 .. |\s{p}|] < \s{p}$ (i.e., $\s{q}[1 \dd |\s{p}|] < \s{p}$),
 we have $\s{q} < \s{p}$.
 It follows that $\s{q}\s{\s{\ell}'}\s{p} < \s{p}\s{q}\s{\s{\ell}'}$ and so
 $\s{\ell}\s{\s{\ell}'}$ cannot be a Lyndon word.

 \item[Case (3) -] If $|\s{q}| < |\s{p}|$. We have $\s{r} = \s{q}\s{p} \in
 \mathcal L$, where $\s{q} = \s{q_1} \s{q_2} \cdots \s{q_j}$, $\s{p} = \s{p_1}
 \s{p_2} \cdots \s{p_k}$. We are required to show that $\s{r^\prime} =
 \s{p}\s{q}\s{\ell}^\prime \notin \mathcal L$; so suppose that $\s{r}^\prime
 \in \mathcal L$.
From $\s{r}$ we have that $\s{q}_1 \le \s{p}_1$, while from $\s{r}^\prime$ we
have $\s{p}_1 \le \s{q}_1$, which together implies $\s{q}_1 = \s{p}_1$. From
the rotation of $\s{r}$ starting $\s{p}_1 \s{p}_2$ we have $\s{q}_2 \le
\s{p}_2$, while from the rotation of $\s{r}^\prime$ starting $\s{q}_1 \s{q}_2$
we find that $\s{p}_2 \le \s{q}_2$ giving $\s{q}_2 = \s{p}_2$. We continue this
argument for $|\s{q}|$ elements which shows that the given word $\s{p}\s{q}$
has a border.
\end{description} 

In the first two cases the order is decided within the first $|\s{p}|$
elements.

\qed
\end{proof}

\begin{lemma}\label{lemm:Lyndon-invalid-point-bordered}\textbf{(Lyndon invalid point for bordered word)}
Given a string $\s{\ell} \in \Sigma^n$,
$n>1$, such that $border(\s{\ell}) >0$,
let $\{\s{\ell^{\prime1}},\s{\ell^{\prime2}},\s{\ell^{\prime3}} \cdots \}$ be right extensions of $\s{\ell}$ by $\{1,2,3,\cdots\}$ characters in $\Sigma$ respectively. Then $\s{\ell}$ and all its right extensions are not Lyndon words if $\s{\ell^{\prime1}}[|\s{\ell^{\prime1}}|] < \s{\ell^{\prime1}}[border(\s{\ell})+1]$.

--- Similarly to Lemma \ref{lemm:Lyndon-invalid-point-primitive}, we refer to this condition as the $\mathcal{L}$-fail condition and to the point (index) of where it occurs as the Lyndon invalid point.
\end{lemma}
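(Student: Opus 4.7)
The plan is to invoke the characterization that a word $\s{w}$ is a Lyndon word if and only if $\s{w}$ is strictly smaller than every non-empty proper suffix of itself (Observation~\ref{obs:Lyndon-Word}(\ref{obs:Lyndon-Word:suffix})). First I would note that $\s{\ell}$ itself is not a Lyndon word, since Observation~\ref{obs:Lyndon-Word}(\ref{obs:Lyndon-Word:primitive}) gives $border(\s{v}) = 0$ for any Lyndon word $\s{v}$, contradicting the hypothesis $border(\s{\ell}) > 0$. The substantive task is therefore to show that every right extension $\s{\ell}^{\prime k}$ (for $k \geq 1$) also fails to be Lyndon, which I plan to do by exhibiting a single explicit proper suffix that is lexicographically strictly smaller than $\s{\ell}^{\prime k}$.

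Set $b = border(\s{\ell})$ and $c = \s{\ell}^{\prime 1}[n+1]$, so the hypothesis becomes $c < \s{\ell}[b+1]$. The suffix I would target is $\s{\sigma}_k := \s{\ell}^{\prime k}[n-b+1 \dd n+k]$, which has length $b+k$. By the defining property of the border, $\s{\ell}[n-b+1 \dd n] = \s{\ell}[1 \dd b]$; consequently $\s{\sigma}_k$ begins with the block $\s{\ell}[1 \dd b]\, c$ followed by the remaining $k-1$ appended characters of the extension.

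Next I would compare $\s{\sigma}_k$ with $\s{\ell}^{\prime k}$ position by position. The first $b$ positions of both strings agree (both begin with $\s{\ell}[1 \dd b]$); at position $b+1$ the suffix contributes $c$ while $\s{\ell}^{\prime k}$ contributes $\s{\ell}[b+1]$, and $c < \s{\ell}[b+1]$ by hypothesis. Hence $\s{\sigma}_k < \s{\ell}^{\prime k}$ in lexicographic order, which witnesses $\s{\ell}^{\prime k} \notin \mathcal L$.

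The same construction works uniformly for every $k \geq 1$, because the comparison is decided at the single position $b+1$, well before any characters appended beyond position $n+1$ are consulted. The main point that requires care is the routine verification that $\s{\sigma}_k$ is a genuine proper non-empty suffix, i.e.\ that $1 < n-b+1 \leq n+k$; this holds because borders are by definition strictly shorter than the word ($b < n$) and $k \geq 1$. Unlike Lemma~\ref{lemm:Lyndon-invalid-point-primitive}, no case split on the relative sizes of a conjugating partition is required, since the pre-existing border already supplies the alignment needed to force the lexicographic inequality.
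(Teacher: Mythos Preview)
Your proof is correct and follows essentially the same route as the paper: both first dispose of $\s{\ell}$ itself via Observation~\ref{obs:Lyndon-Word}(\ref{obs:Lyndon-Word:primitive}), and then exhibit the suffix $\s{\ell}^{\prime k}[n-b+1 \dd n+k]$, show it agrees with $\s{\ell}^{\prime k}$ on the first $b$ letters by the border property, and conclude that the mismatch at position $b+1$ forces the suffix to be strictly smaller. Your write-up is in fact more explicit than the paper's (you name $b$, $c$, $\s{\sigma}_k$ and verify that $\s{\sigma}_k$ is a proper non-empty suffix), but the underlying argument is identical.
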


\begin{proof}[Lemma~\ref{lemm:Lyndon-invalid-point-bordered}]
The lemma follows from the following two cases:

\begin{enumerate} 
 \item $\s{\ell} \notin \mathcal L$, this is immediate from the hypothesis that
 $border(\s{\ell})>0$ and Observation \ref{obs:Lyndon-Word}(1).
\item Consider the right extension $\s{\ell^{\prime m}}$ of $\s{\ell}$,
where $m \ge 1$. Then the suffix $\s{\ell{^{\prime m}}}[n -
border(\s{\ell})+1 \dd n+m]$ of $\s{\ell^{\prime m}}$ is lexicographically
less than $\s{\ell}^{\prime m}[1 \dd border(\s{\ell})+1]$ and
consequently less than $\s{\ell{^{\prime m}}}$, contradicting the property that a 
Lyndon word is strictly smaller than any of its proper suffixes. Hence no
right extension of $\s{\ell}$ is a Lyndon word.
\end{enumerate}

In case (2) the order is decided within the first $border(\s{\ell}) + 1$ elements.

\qed 
\end{proof}  

\begin{fact}
For a given string $\s{s} \in \Sigma^n$,
suppose we have computed $\beta(\s{s})$. Then, for $1 \leq i \leq n$, the following holds true:
$$\mathcal L\beta(\s{s})[i] \in \{\beta(\s{s})[i],~\beta(\s{s})[\beta(\s{s})[i]],~\beta(\s{s})[\beta(\s{s})[\beta(\s{s})[i]]],~ \dd, ~ 0 \}.
$$
\end{fact}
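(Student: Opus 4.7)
The plan is to reduce the fact to the classical observation that the full set of border lengths of $\s{s}[1 \dd i]$ is precisely the iterated chain $\beta(\s{s})[i],\ \beta(\s{s})[\beta(\s{s})[i]],\ \ldots,\ 0$. Granted this characterization, the statement follows at once: by Definition~\ref{definition-LBA}, $\mathcal L \beta(\s{s})[i]$ is the length of the longest border of $\s{s}[1 \dd i]$ which is also a Lyndon word, or $0$ if no such border exists; in either case it is the length of some border of $\s{s}[1 \dd i]$, and therefore must appear in the chain.

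First I would establish the easy inclusion by induction on the depth of iteration, showing that every value appearing in the chain is indeed a border length of $\s{s}[1 \dd i]$. The base case is just the definition of $\beta(\s{s})[i]$. For the inductive step, if $\s{b}$ is a border of $\s{s}[1 \dd i]$ whose length equals the $k$-fold iterate of $\beta$ at $i$, then the longest proper border of $\s{b}$, of length $\beta(\s{s})[|\s{b}|]$, is in turn a border of $\s{s}[1 \dd i]$ by Observation~\ref{obs:Border-Array}(2). The values strictly decrease along the chain, so the iteration eventually produces $0$.

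The reverse inclusion --- that no border length is skipped by the chain --- is where the main care is needed. Let $\s{c}$ be an arbitrary border of $\s{s}[1 \dd i]$ with $|\s{c}| < \beta(\s{s})[i]$, and let $\s{b}$ denote the longest proper border. Since $\s{c}$ and $\s{b}$ are both prefixes of $\s{s}[1 \dd i]$ with $|\s{c}| < |\s{b}|$, $\s{c}$ is a prefix of $\s{b}$; by the symmetric argument on the suffix side, $\s{c}$ is also a suffix of $\s{b}$. Hence $\s{c}$ is a border of $\s{b}$, whence $|\s{c}| \le \beta(\s{s})[\beta(\s{s})[i]]$. Repeating this nesting argument at each level of the chain forces $|\s{c}|$ to equal one of the iterated values, completing the characterization. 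The only genuine subtlety --- and thus the main obstacle --- is this nesting step, namely that a shorter border of a word is automatically a border of any longer border of the same word; everything else is bookkeeping with Observation~\ref{obs:Border-Array}(2) and the definitions.
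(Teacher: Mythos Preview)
Your argument is correct and is the standard one: reduce to the classical characterization that the set of border lengths of $\s{s}[1\dd i]$ is exactly the descending chain obtained by iterating $\beta$, then observe that $\mathcal L\beta(\s{s})[i]$ is by definition a border length of $\s{s}[1\dd i]$ (with $0$ covering the vacuous case). The nesting step you flag as the only subtlety --- that any border shorter than the longest proper border $\s{b}$ is already a border of $\s{b}$ --- is exactly right and your prefix/suffix argument handles it cleanly.

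As for comparison with the paper: the paper states this Fact without proof. The supporting observation --- that the chain $\beta(\s{s})^{1}[i],\beta(\s{s})^{2}[i],\ldots,0$ ``identifies every border of $\s{s}[1\dd i]$'' --- is invoked later in Section~\ref{lba:sec:algorithm} as a well-known property, justified only by a reference to Observation~\ref{obs:Border-Array}(2) for one direction. So your write-up is not merely consistent with the paper's approach; it actually supplies the details the paper omits, in particular the reverse inclusion.
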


A binary Lyndon word can also be expressed in terms of Lyndon properties of the
integer parameters (exponents) given by its \emph{Run Length Encoding}. For a
binary string $\s{\ell}$, let $\RLE(\s{\s{\ell}_a})$ denote the
encoding (as a string) of the subsequence of $\s{\ell}$ consisting of all letters $a$
but no letter $b$ ($p^\prime, p_1; \dd; p_m$), similarly
$\RLE(\s{\s{\ell}_b})$ denotes the encoding of the subsequence of
$\s{\ell}$ consisting of all letters $b$ ($q\prime, q_1; \dd; q_m$).

\begin{proposition}\label{prop:rle}
For a given string $\s{s} \in \Sigma^+$, the \emph{Run Length Encoding}
$RLE(\s{s})$ and subsequently, the list $\mathcal R(\s{s})$ can be computed
in time and space linear in the size of the given string $\s{s}$.
\end{proposition}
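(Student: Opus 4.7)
The plan is to argue this by exhibiting a direct one-pass algorithm and observing that its bookkeeping is bounded by the size of the input and its output. The statement is essentially a standard folklore observation about run length encoding, so the work is to spell out the two linear-time passes and to verify that the auxiliary list $\mathcal R(\s{s})$ introduced alongside $\RLE$ does not require any super-linear processing.

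First I would describe the computation of $\RLE(\s{s})$ itself. Scan $\s{s}$ left to right with a single index $i$ and a counter $c$, initialised to the multiplicity of the first run. Whenever $\s{s}[i] = \s{s}[i-1]$ increment $c$; otherwise emit the pair $(\s{s}[i-1], c)$ to the output list and reset $c \leftarrow 1$. At termination emit the final pair. Each character of $\s{s}$ is touched exactly once and contributes $O(1)$ work and $O(1)$ additional output cells, so both time and additional space are in $O(|\s{s}|)$; the length of the produced encoding is at most $|\s{s}|$.

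Next I would address $\mathcal R(\s{s})$. Following the paragraph preceding the proposition, $\mathcal R(\s{s})$ is extracted from $\RLE(\s{s})$ by separating the encoding of the subsequence of $a$-letters and the subsequence of $b$-letters (in the binary case, or more generally by projecting onto each alphabet symbol). Given $\RLE(\s{s})$, this is achieved by a single additional pass through the pair list, appending the exponent of each pair to the list associated with its character. This second pass is again $O(|\s{s}|)$ in time and space because $|\RLE(\s{s})| \le |\s{s}|$.

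There is no real obstacle here: both passes are elementary and the resulting data structures are bounded above by the input size. The only point that requires a brief remark is that the emitted pairs and the projected lists together occupy $O(|\s{s}|)$ cells, which follows immediately from the fact that each run contributes one pair and the total number of characters represented by all runs equals $|\s{s}|$.
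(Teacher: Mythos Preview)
Your argument is correct: a single left-to-right scan with a counter computes $\RLE(\s{s})$ in $O(|\s{s}|)$ time and space, and a second pass over the at most $|\s{s}|$ emitted pairs extracts the exponent list $\mathcal R(\s{s})$ within the same bounds.

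Note, however, that the paper itself supplies no proof of this proposition; it is stated as a self-evident fact and left unjustified. So there is no ``paper's own proof'' to compare against --- your write-up simply fills in a routine detail the authors chose to omit. One minor remark: the paper's $\mathcal R(\s{s})$ is described as a single linked list of all exponents $p',q',p_1,q_1,\dots$ in order (see the example $\mathcal R(\s{s})=\{1,1,2,1,3,2,2,1,2,1\}$), not as two separate per-letter lists, so your second pass is even simpler than you present it --- one just drops the letter component of each pair and keeps the exponents. This does not affect the linear bound.
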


Now we have the following Proposition to check if a given binary word is a
Lyndon word or not (using the \emph{Run Length Encoding} of the binary
word).

\begin{proposition}\label{prop:Lyndon-border-binary}
Let $\mathcal{L}$ be the set of Lyndon words over an alphabet $\Sigma$,
where $\Sigma = \{a,b\}$ and $a < b$.
For a non-letter word $\s{\ell} = \s{\ell}_1\s{\ell}_2 \cdots \s{\ell}_n$
and its corresponding exponents list $\mathcal R(\mathcal L)$, we have the
following:


\begin{description}
 \item [Case $ m = 0$,] we have $p_j, q_j = 0$ for $j \in \{1 \dd m\}$ then
  $\s{\ell} \in \mathcal{L}$ if and only if $p^\prime, q^\prime > 0$.

\item [Case $ m > 0$,] we have $p^\prime, q^\prime, p_j, q_j > 0$  then
   \begin{enumerate}
    \item $\s{\ell} \in \mathcal{L}$ if and only if $p^\prime > p_j$ for $j
    \in \{1 \dd m\}$.

    \item if $p^\prime = p_m$ and $p^\prime > p_j$ for $j \in \{1 \dd
    m-1\}$ then $\s{\ell} \in \mathcal{L}$ if and only if $q^\prime < q_m
    \wedge \lambda < 0$.

    \item if there exists $p_j$ such that $p^\prime =p_j$ and $p^\prime >
    p_m$ then $\s{\ell} \in \mathcal{L}$ if and only if $q^\prime \le q_j
    \wedge \lambda < 0$ for $j \in \{1 \dd m-1\}$. 

   \end{enumerate}
\end{description}

where $\lambda$ is the index of Lyndon invalid point i.e.
$\mathcal{L}$-fail condition (As defined in
Lemma.~\ref{lemm:Lyndon-invalid-point-primitive} and Lemma. \ref{lemm:Lyndon-invalid-point-bordered}). 

\noindent
\end{proposition}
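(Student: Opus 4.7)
My plan is to verify the three sub-cases directly from the characterisation that $\s{\ell}\in\mathcal{L}$ if and only if $\s{\ell}$ is strictly lex-less than each of its non-empty proper suffixes. Since $a<b$, any non-letter Lyndon word over $\{a,b\}$ begins with $a$ and ends with $b$, so its \RLE\ is forced into the form $\s{\ell}=a^{p'}b^{q'}a^{p_1}b^{q_1}\cdots a^{p_m}b^{q_m}$ with all exponents at least $1$. Proper suffixes beginning with $b$ are automatically larger than $\s{\ell}$, so only suffixes of the shape $a^{j}b^{q_k}a^{p_{k+1}}\cdots a^{p_m}b^{q_m}$ (with $k\in\{0,\ldots,m\}$, $p_0:=p'$, and $1\le j\le p_k$) need be compared with $\s{\ell}$.

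Each such comparison is governed by the opening pair of $a$-exponents. If $p'>j$ then $\s{\ell}$ wins within $j{+}1$ positions, as it still contains an $a$ at the first point where the suffix has already started its $b$-run; if $p'<j$ then $\s{\ell}$ loses by the symmetric argument; and if $p'=j$ the comparison rolls into the $b$-runs ($q'$ versus $q_k$) and thereafter recurses through the remaining RLE pairs. This immediately disposes of the $m=0$ case and of Case~(1), for the hypothesis $p'>p_j$ for every $j$ forces every admissible $(k,j)$ into the first branch, so $\s{\ell}$ wins in the first $p'+1$ positions; conversely, any violation of $p'>p_j$ exhibits an explicit suffix witnessing failure.

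In Case~(2), the sole equality $p'=p_m$ yields one non-trivial continuation at the final $a$-run, so $\s{\ell}$ wins iff $q'<q_m$; the side condition $\lambda<0$ records that no $\mathcal{L}$-fail in the sense of Lemmas~\ref{lemm:Lyndon-invalid-point-primitive} and~\ref{lemm:Lyndon-invalid-point-bordered} is triggered during this comparison. Case~(3) is the structurally analogous situation in which the matched $a$-run is not the last, so after the initial match the comparison continues into $a^{p_{j+1}}, b^{q_{j+1}},\ldots$; the condition $q'\le q_j \wedge \lambda<0$ asserts either a decisive early win ($q'<q_j$) or a run-by-run continuation that never triggers the $\mathcal{L}$-fail condition.

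The main obstacle will be Case~(3), since the recursion can extend arbitrarily deep into the RLE and hence requires a succinct certificate that the whole tail comparison is resolved without failure. This is exactly what the two invalid-point lemmas supply: each step of the recursion either decides by a strict run-length inequality, or continues with $\lambda<0$ certifying the absence of a border-induced or rotation-induced fail. I would discharge this case by induction on the number of equal run-pairs consumed before resolution, invoking Observation~\ref{obs:Lyndon-Word}(\ref{obs:Lyndon-Word:suffix}) for termination, and invoking Lemmas~\ref{lemm:Lyndon-invalid-point-primitive} and~\ref{lemm:Lyndon-invalid-point-bordered} to convert each residual sub-comparison into an explicit exponent-level test on the remaining RLE pairs.
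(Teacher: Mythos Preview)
The paper never actually proves Proposition~\ref{prop:Lyndon-border-binary}; it is stated twice (in Section~\ref{lba:sec:combinatorics} and again in the binary-alphabet subsection) but in both places it is asserted without argument and then merely invoked in Lemma~\ref{lemm:Lyndon-prefix-binary}. So there is no ``paper's own proof'' to compare your proposal against.

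That said, your overall strategy---reducing to the suffix characterisation, discarding $b$-initial suffixes, and comparing $a$-initial suffixes run-by-run via the RLE exponents---is the natural one and would be what any proof of this proposition must do. Two remarks. First, the statement as written is internally inconsistent: item~(1) under Case $m>0$ is phrased as an ``if and only if'', yet items~(2) and~(3) then give further sufficient conditions for $\s{\ell}\in\mathcal L$ under hypotheses that violate item~(1). Your write-up quietly treats item~(1) as only the sub-case where $p'>p_j$ for all $j$, which is presumably the intended reading, but you should flag this explicitly rather than paper over it. Second, your handling of Case~(3) is still a sketch: you appeal to an induction on ``the number of equal run-pairs consumed'' and to Lemmas~\ref{lemm:Lyndon-invalid-point-primitive}--\ref{lemm:Lyndon-invalid-point-bordered} to certify ``$\lambda<0$'', but the paper never defines $\lambda$ precisely enough to make that induction mechanical (it is introduced only informally as ``the index of the Lyndon invalid point''). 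To turn this into a proof you would need to state exactly which residual word the invalid-point lemmas are being applied to at each inductive step and verify that their hypotheses (border-free, respectively bordered with the stated letter inequality) actually hold there.
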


\begin{lemma}\label{lemm:Lyndon-encoding-binary}
Let $\s{\ell}$ be a binary word with $\s{\ell}[1]=a$, $\s{\ell}[n]=b$ and
associated encodings $\RLE(\s{\s{\ell}_a})$ and
$\RLE(\s{\s{\ell}_b})$. Then $\s{\ell}$ is a Lyndon word if and only
if, either \\
(i) $\RLE(\s{\s{\ell}_a})$ is a Lyndon word on the alphabet $\{1>2>3>
\cdots \}$, or \\
(ii) $\RLE(\s{\s{\ell}_a})$ is a repetition of a Lyndon word as in (i)
and $\RLE(\s{\s{\ell}_b})$ is a Lyndon word on the alphabet $\{1<2<3<
\cdots \}$.
\end{lemma}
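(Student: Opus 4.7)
The plan is to apply the standard characterization that $\s{\ell}$ is a Lyndon word iff $\s{\ell}$ is strictly lex-smaller than every non-empty proper suffix. Writing $\s{\ell}=a^{p'}b^{q'}a^{p_1}b^{q_1}\cdots a^{p_m}b^{q_m}$ with all exponents strictly positive (forced by $\s{\ell}[1]=a$, $\s{\ell}[n]=b$, and $\s{\ell}$ being a non-letter word), I would first observe that proper suffixes starting with $b$ are automatically greater than $\s{\ell}$, and that any suffix starting strictly inside an $a$-run reduces trivially to a one-letter comparison once $p_k\le p'$ is established for all $k\ge 1$. Thus the Lyndon property reduces to the comparisons $\s{\ell}<S_k$ for $k=1,\ldots,m$, where $S_k=a^{p_k}b^{q_k}\cdots a^{p_m}b^{q_m}$. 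A block-by-block inspection of these comparisons shows they are equivalent to a ``pair-order Lyndon'' condition on the pair-sequence $\Pi=(p',q'),(p_1,q_1),\ldots,(p_m,q_m)$ under the order $(p,q)\prec(p',q')$ iff $p>p'$, or $p=p'$ and $q<q'$.

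For the ($\Leftarrow$) direction I would treat the two cases separately. Under condition (i), the hypothesis that $\RLE(\s{\s{\ell}_a})$ is Lyndon on $\{1>2>\cdots\}$ gives that this word is border-free and strictly greater (in the standard integer order) than each of its proper suffixes, so the very first $p$-coordinate comparison in every $\s{\ell}$-vs-$S_k$ check already resolves with $p'>p_k$ and yields $\s{\ell}<S_k$. Under condition (ii), with $\RLE(\s{\s{\ell}_a})=u^r$ for some $r\ge 2$ and $u$ Lyndon on reverse, I would split the shifts $k$ by $k\bmod d$ (where $d=|u|$): shifts with $k\bmod d\neq 0$ are resolved by the Lyndon-on-reverse property of $u$ acting at the $p$-level, whereas shifts with $k\bmod d=0$ have all compared $p$-coordinates matching and are therefore resolved by the Lyndon-on-standard property of $\RLE(\s{\s{\ell}_b})$ acting at the $q$-level.

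For the ($\Rightarrow$) direction I would assume $\s{\ell}$ is Lyndon, so that $\Pi$ is Lyndon in the pair-order, and then proceed by a dichotomy on $\RLE(\s{\s{\ell}_a})$. If $\RLE(\s{\s{\ell}_a})$ is primitive, I would argue that the pair-order Lyndon property forces every pair-shift to resolve at the $p$-coordinate alone, which exactly says that $\RLE(\s{\s{\ell}_a})$ is Lyndon on reverse, giving case (i). Otherwise $\RLE(\s{\s{\ell}_a})=u^r$ with $r\ge 2$ and $u$ primitive; I would first use the pair-order constraints at shifts with $k\bmod d\neq 0$ to conclude that $u$ itself must be Lyndon on reverse, and then use the constraints at shifts with $k\bmod d=0$ to deduce that $\RLE(\s{\s{\ell}_b})$ is Lyndon on standard. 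The hardest step will be this last one: I need to verify that the period-aligned shifts of $\Pi$ actually impose \emph{all} the inequalities required for $\RLE(\s{\s{\ell}_b})$ to be Lyndon on standard, and not merely a weaker family of shifted comparisons; this is where careful combinatorial bookkeeping on how period-aligned shifts of $\Pi$ translate into suffix comparisons of $\RLE(\s{\s{\ell}_b})$ is essential.
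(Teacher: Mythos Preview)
Your overall plan---reducing the Lyndon property of $\s{\ell}$ to the comparisons $\s{\ell}<S_k$ against the block-aligned suffixes and rephrasing these via a ``pair order'' on the exponent pairs $(p_j,q_j)$---is sound and considerably more explicit than the paper's own argument, which merely asserts that under (i) (respectively (ii)) the $a$-encoding of $\s{\ell}$ beats that of every rotation (respectively, with a $b$-encoding tiebreak), and dismisses the converse in a single sentence.

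However, your treatment of the ($\Leftarrow$) direction under condition (i) has a genuine gap. You claim that the Lyndon-on-reverse property of $\RLE(\s{\ell_a})$ makes ``the very first $p$-coordinate comparison in every $\s{\ell}$-vs-$S_k$ check already resolve with $p'>p_k$''. This does not follow: being Lyndon on $\{1>2>\cdots\}$ only guarantees that at the \emph{first differing} $p$-coordinate the $\s{\ell}$ side is numerically larger; in the block positions before that difference the $q$-coordinates can overturn the binary comparison. Concretely, take $\s{\ell}=a^2b^3a^2b\,ab$, so that $\RLE(\s{\ell_a})=(2,2,1)$, which is Lyndon on $\{1>2>\cdots\}$ and hence satisfies (i). But the suffix $S_1=a^2b\,ab$ has $p'=p_1=2$ while $q'=3>1=q_1$, whence $S_1<\s{\ell}$; so $\s{\ell}$ is not Lyndon. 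Thus (i) by itself does not force $\s{\ell}\in\mathcal L$, and your argument for this case cannot be completed as written. The paper's proof is equally silent on this point---it never explains how a purely $a$-level inequality controls the interleaved $b$-runs---so the difficulty lies already in the lemma's formulation rather than only in your write-up; your correct ``pair-order Lyndon'' reformulation is in fact the right characterization, but it is strictly stronger than the disjunction (i)$\vee$(ii).
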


\begin{proof}[Lemma~\ref{lemm:Lyndon-encoding-binary}]
First we consider necessity.
So suppose that (i) holds. Consider any rotation $\s{\s{\ell}_a^r}$ of
$\s{\s{\ell}_a}$ (including those with split runs of $a$'s). Then
$\RLE(\s{\s{\ell}_a}) < \RLE(\s{\s{\ell}_a^r})$ in lexorder
over $\{1>2>3> \cdots \}$. Now suppose that (ii) holds. Then for a rotation
$\s{\s{\ell}_a^r}$ of $\s{\s{\ell}_a}$, either $\RLE(\s{\s{\ell}_a}) <
\RLE(\s{\s{\ell}_a^r})$ in lexorder over $\{1>2>3> \cdots \}$, or
$\RLE(\s{\s{\ell}_a}) = \RLE(\s{\s{\ell}_a^r})$ and
$\RLE(\s{\s{\ell}_b}) < \RLE(\s{\s{\ell}_b^r})$ in lexorder
over $\{1<2<3< \cdots \}$.

For sufficiency, the conditions guarantee that $\s{\ell} \in \mathcal{L}$.

\qed
\end{proof}

Lemma~\ref{lemm:Lyndon-encoding-binary} shows that a binary string can be
decomposed into its Lyndon factors with a double application of Duval's
linear factorization algorithm \cite{Du83}, first on the exponents of $a$
and then on those for $b$ -- hence linear in $|\mathcal{R}(\s{\s{\ell}})|$
(once $\mathcal{R}(\s{\s{\ell}})$ is constructed in time linear in
$|\ell|$). It follows that a word can be tested to be a Lyndon word in
linear time: check whether there is more than one factor in the
factorization.

\begin{lemma}\label{lemm:Lyndon-prefix-binary}
Suppose we are given a binary string $\s{s}$ of length $n$ and the
corresponding list of exponents $\mathcal R(\s{s})$. Then we can compute
the array $\Psi(\s{s})$ in $O(n)$ time.
\end{lemma}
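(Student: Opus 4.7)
The plan is to work entirely in the RLE domain and to exploit the characterization of the Lyndon property established in Proposition~\ref{prop:Lyndon-border-binary} and Lemma~\ref{lemm:Lyndon-encoding-binary}, so that each prefix of $\s{s}$ is classified while advancing one run at a time through $\mathcal R(\s{s})$. First I would invoke Proposition~\ref{prop:rle} to obtain $\RLE(\s{s}_a)$ and $\RLE(\s{s}_b)$ in $O(n)$ time; this reduces the problem to scanning two integer sequences whose total length is at most $|\mathcal R(\s{s})| \le n$.

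Next I would make a single left-to-right sweep over $\mathcal R(\s{s})$, maintaining the running parameters $p', q', p_1, \dots, p_m, q_1, \dots, q_m$ that appear in Proposition~\ref{prop:Lyndon-border-binary}, together with the index $\lambda$ of the current Lyndon invalid point as defined by Lemmas~\ref{lemm:Lyndon-invalid-point-primitive} and \ref{lemm:Lyndon-invalid-point-bordered}. Extending the string by one additional position inside the current run changes exactly one exponent by one, so the three cases of Proposition~\ref{prop:Lyndon-border-binary} (and its condition $\lambda<0$) can be re-evaluated in constant time per position; each entry $\Psi(\s{s})[i]$ is then written based on whether the prefix of exponents currently satisfies the Lyndon condition or the $\mathcal L$-fail condition has been triggered.

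The delicate step, and the one I expect to be the main obstacle, is bookkeeping when the comparisons $p' \gtreqless p_j$ and $q' \gtreqless q_j$ need to be rechecked at the boundary between two runs, because naively one might rescan all previous $p_j, q_j$ and incur $\Theta(|\mathcal R(\s{s})|)$ work per boundary. To avoid this I would use the Duval-style incremental technique from \cite{Du83,L05}: keep a pointer to the prefix against which the current suffix is being compared, advance it only when equalities hold, and reset it only when a strict inequality occurs, which discharges the work in an amortized $O(1)$ per run fashion. Summing the scan over $\mathcal R(\s{s})$ together with the initial RLE computation yields the desired $O(n)$ bound for producing $\Psi(\s{s})$.
\qed
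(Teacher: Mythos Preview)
Your proposal is correct and follows essentially the same approach as the paper: invoke Proposition~\ref{prop:Lyndon-border-binary} for each prefix and argue that the total work is $O(n)$. The paper's own proof is in fact much terser than yours---it simply asserts that the proposition yields a ``constant number of comparisons'' per prefix and multiplies by $n$---whereas you take the extra care to explain how the incremental maintenance of the exponents and the Duval-style amortization actually deliver $O(1)$ per step; this additional justification is welcome, but it is not a different route.
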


\begin{proof}[Lemma~\ref{lemm:Lyndon-prefix-binary}]
We first focus on computing the $i$-th element $\Psi[i](\s{s})$ of the
array $\Psi(\s{s})$. Clearly, by Proposition
\ref{prop:Lyndon-border-binary}, it takes constant number of comparisons to
determine whether the prefix $\s{s}[1 \dd i]$ is a Lyndon word or not.
Since we have to repeat the steps in Proposition
\ref{prop:Lyndon-border-binary} for $n$ prefixes, therefore $\Psi(\s{s})$
can be done in linear time once $\mathcal R(\s{s})$ is computed.

\qed
\end{proof}

\section{Lyndon Border Array Computation}\label{lba:sec:algorithm}
In this section we develop an efficient algorithm for computing the Lyndon
Border Array. We first recall an interesting relation that exists for
borders which we refer to as the \emph{Chain of Borders} henceforth. Since
every border of any border of $\s{s}$ is also a border of $\s{s}$ 
(Observation 2(2)), it turns out that, the border array $\beta(\s{s})$
compactly describes all the borders of every prefix of $\s{s}$. For every
prefix $\s{s}[1 \dd i]$ of $\s{s}$, the following sequence
\begin{equation}\label{seq:cover_array} \beta(\s{s})^{1}[i],
\beta(\s{s})^{2}[i], \dd , \beta(\s{s})^{m}[i]
\end{equation}
is well defined and monotonically decreasing to $\beta(\s{s})^{m}[i] = 0$
for some $m \ge 1$ and this sequence identifies every border of $\s{s}[1
\dd i]$. Here, $\beta(\s{s})^{k}[i]$ is the length of the $k$-th longest
border of $\s{s}[1\dd i],$ for $1\leq k\leq m$. Sequence
(\ref{seq:cover_array}) identifies the above-mentioned chain of borders.
We will also be using the usual notion of the length of the chain of
borders. Clearly, the length of the chain in Sequence
(\ref{seq:cover_array}) is $m$. Also we use the following notion and
notations. In Sequence (\ref{seq:cover_array}), we call
$\beta(\s{s})^{m}[i] = 0$ the \emph{last} value and
$\beta(\s{s})^{m-1}[i]$ the \emph{penultimate} value in the chain. We now
present the following interesting facts that will be useful in our algorithm.
\begin{fact}\label{fact:noChain}
The length of the chain of borders for a Lyndon Border is at most 2.
\end{fact}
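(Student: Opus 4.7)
The plan is to derive this fact as an immediate consequence of Observation~\ref{obs:Lyndon-Word}(\ref{obs:Lyndon-Word:primitive}), which states that every (non-letter) Lyndon word is border-free. No further combinatorial machinery should be required, so the whole argument will fit in a few lines.

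First I will unpack the statement. I read ``the chain of borders for a Lyndon Border'' as the tail of Sequence~(\ref{seq:cover_array}) that begins at the entry whose value equals $\mathcal L\beta(\s{s})[i]$ and then iterates $\beta$ down to the terminating $0$. With this reading, write $\ell = \mathcal L\beta(\s{s})[i]$ and $\s{u} = \s{s}[1\dd \ell]$, and split into two cases.

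If $\ell = 0$, the tail is just the single value $0$, so its length is $1 \le 2$. If $\ell > 0$, then $\s{u}$ is by definition a Lyndon word, so Observation~\ref{obs:Lyndon-Word}(\ref{obs:Lyndon-Word:primitive}) gives $border(\s{u})=0$, i.e., $\beta(\s{s})[\ell]=0$. Hence one iteration of $\beta$ from the Lyndon border already lands on the terminating $0$, and the tail reads exactly $\ell,\, 0$, of length~$2$. In both cases the bound holds.

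There is essentially no obstacle: the fact is a one-step corollary of the border-freeness of Lyndon words together with the definition of the chain of borders from Sequence~(\ref{seq:cover_array}). The only mild subtlety is pinning down what ``chain for a Lyndon Border'' refers to before invoking Observation~\ref{obs:Lyndon-Word}(\ref{obs:Lyndon-Word:primitive}); this is a matter of unpacking notation rather than genuine combinatorial content.
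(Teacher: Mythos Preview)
Your proof is correct and follows exactly the paper's own approach: the fact is an immediate consequence of Observation~\ref{obs:Lyndon-Word}(\ref{obs:Lyndon-Word:primitive}), since a Lyndon border, being a Lyndon word, has no nontrivial border and so the chain terminates after one step. Your case split on $\ell=0$ versus $\ell>0$ is a harmless elaboration of the paper's one-line argument.
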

\begin{proof}
The result follows, because, if a border is a Lyndon Word, then it cannot
itself have a border (Observation 1(1)).

\qed
\end{proof}
\begin{fact}\label{fact:xAnd0}
Suppose $\mathcal L \beta(\s{s})[i] = x$. Then $\mathcal L \beta(\s{s})[x] = 0$.
\end{fact}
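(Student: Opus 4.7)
The plan is to give a short direct argument that hinges on the definition of $\mathcal L \beta$ and Observation~\ref{obs:Lyndon-Word}(\ref{obs:Lyndon-Word:primitive}). First I would unfold the hypothesis: by Definition~\ref{definition-LBA}, $\mathcal L \beta(\s{s})[i] = x$ means that $\s{s}[1 \dd x]$ is the longest prefix of $\s{s}[1 \dd i]$ that simultaneously (a) is a proper suffix of $\s{s}[1 \dd i]$ and (b) is itself a Lyndon word. The crucial consequence for us is (b): $\s{s}[1 \dd x] \in \mathcal L$.

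Next I would apply Observation~\ref{obs:Lyndon-Word}(\ref{obs:Lyndon-Word:primitive}), which tells us that any Lyndon word $\s{\ell}$ of length at least $2$ satisfies $border(\s{\ell}) = 0$, i.e.\ has no proper non-empty border. (For $x = 1$, the single-letter prefix $\s{s}[1 \dd 1]$ trivially has no proper non-empty border.) Since $\s{s}[1 \dd x]$ is a Lyndon word, it therefore admits no proper non-empty border at all; a fortiori, it admits no proper non-empty border which is a Lyndon word. By Definition~\ref{definition-LBA} this forces $\mathcal L \beta(\s{s})[x] = 0$.

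Finally, I would handle the edge case $x = 0$ in a single sentence: here the claim $\mathcal L \beta(\s{s})[x] = 0$ is vacuous (or follows from the convention $\mathcal L \beta(\s{s})[0] = 0$ used in Example~\ref{example01}).

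There is no real obstacle in this fact; the entire content is the observation that a Lyndon word is border-free, so the only care needed is in cleanly separating the two roles of ``border'' (proper suffix-prefix of $\s{s}[1 \dd i]$ versus proper suffix-prefix of $\s{s}[1 \dd x]$) and invoking border-freeness for the latter.
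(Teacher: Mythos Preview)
Your argument is correct and essentially matches the paper's: the paper simply says ``Immediate from Fact~\ref{fact:noChain}'', and Fact~\ref{fact:noChain} is itself just the observation that a Lyndon word is border-free (Observation~\ref{obs:Lyndon-Word}(\ref{obs:Lyndon-Word:primitive})), which is exactly what you invoke directly. Your version is slightly more explicit about the edge cases $x \le 1$, but the core idea is identical.
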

\begin{proof}
Immediate from Fact \ref{fact:noChain}.

\qed
\end{proof}
Now we are ready to propose a straightforward naive algorithm to compute
the Lyndon Border Array $\mathcal L\beta(\s{s})$ for $\s{s}[1\dd n]$ as
follows.

\noindent Algorithm \emph{Naive Lyndon Border Array Construction}
\begin{algorithmic}[1]
\State Compute $\beta(\s{s})[1\dd n]$
\For{$ i = n  \to 1$}
    \State Find the penultimate value $k$ of the chain of borders for $\beta(\s{s})[i]$
    \If{$\s{s}[1\dd k]$ is a Lyndon word \label{Step:LyndonCheck}}
         \State Set $\mathcal L\beta(\s{s})[i] = k$
    \Else
        \State Set $\mathcal L\beta(\s{s})[i] = 0$
    \EndIf
\EndFor
\end{algorithmic}

The correctness of the algorithm follows directly from
Facts~\ref{fact:noChain} and \ref{fact:xAnd0}. Now we discuss an efficient
implementation of the algorithm. When we traverse through the chain of
borders, we reach the penultimate value $k$ and then the last value $0$.
Clearly, all we need is to check for each such chain, whether
$\s{s}[1\dd k]$ is a Lyndon word. So, we are always interested in
finding whether $\s{s}[1\dd k]$ is a Lyndon word where $\beta(\s{s})[k]
= 0$. At this point the computation of SNLB$f$P
(Section~\ref{lba:sec:combinatorics}) will be applied. To give an example,
by Corollary~\ref{cor:Lyndonprefix}, we know that any border-free prefix to
the left of SNLB$f$P is a Lyndon word and any border-free prefix to the
right of SNLB$f$P is non-Lyndon. This gives us an efficient weapon to check
the \emph{If} statement of Line~\ref{Step:LyndonCheck} of the above
algorithm. Finally, as can be seen below, we can make use of a stack data
structure along with some auxiliary arrays to efficiently implement the
above algorithm. In particular, we simply keep an array $Done[1\dd n]$
initially all false, using a stack and the SNLB$f$P index (say $r$). The
algorithm is presented below:

\noindent Algorithm \emph{Efficient Lyndon Border Array Construction}.
\begin{algorithmic}[1]
\For{$ i = 1  \to n$}
    \State Set $Done[i] = $FALSE
\EndFor
\State Compute $\beta(\s{s})[1\dd n]$
\State Compute SNLB$f$P using Algorithm SNLB$f$P. Say, $\s{s}[1\dd r]$ is the SNLB$f$P.
\For{$ i = n  \to 1$} 
    \If{$Done[i] =$ TRUE}
        \State continue \Comment{i.e., skip what is done below}
    \EndIf
    \State Compute the chain of $\beta(\s{s})[i]$ and push each onto a stack $\mathcal S$.
    \State Compute the penultimate value $k$ of the chain of $\beta(\s{s})[i]$.
    \While{$\mathcal S$ is nonempty}
        \State Pop the value $j$ from the stack
        \If{ $k< r$ }  \Comment{i.e., $\s{s}[1\dd k]$ is a Lyndon word}
              \State $\mathcal \beta(\s{s})[j]=k$
        \Else
              \State $\mathcal \beta(\s{s})[j]=0$
        \EndIf
        \State $Done[j] =$TRUE
    \EndWhile
\EndFor
\end{algorithmic}

Clearly, the time complexity of the above algorithm depends on how many times a chain of borders is traversed. If we can ensure that a chain of borders is never traversed more than once, then the algorithm will surely be linear. To achieve that we use another array $\mathcal Pval[1\dd n]$, initially all set to $-1$. This is required to efficiently compute the penultimate value of a chain of borders. The difficulty here arises because we may need to traverse a part of a chain of borders more than once through different indices because two different indices of the border array, $\beta(\s{s})$, may have the same value. This may incur more cost and make the algorithm super-linear. To avoid traversing any part of a chain of borders more than once we use the array $\mathcal Pval[1\dd n]$ as follows. Clearly, we only need to traverse the chain of borders to compute the penultimate value. So, as soon as we have computed the penultimate value $k$, for a chain, we store the value in the corresponding indices of $\mathcal Pval$. To give an example, suppose we are considering the chain of borders $\beta(\s{s})^{1}[i], \beta(\s{s})^{2}[i], \dd , \beta(\s{s})^{m-1}[i],  \beta(\s{s})^{m}[i]$, where $\beta(\s{s})^{m}[i] = 0$, and suppose that the penultimate value is $k$, i.e., $\beta(\s{s})^{m-1}[i] = k$. Now suppose further that the indices involved in the above chain of borders are $i = i_1, i_2, \dd , i_{m-1}, i_m$. Then as soon as we have got the penultimate value, we update $\mathcal Pval[i_1] = \mathcal Pval[i_2] = \dd = \mathcal Pval[i_{m-1}] = k$. How does this help? If the same chain or part thereof is reached for computing the penultimate value, we can easily return the value from the corresponding $\mathcal Pval[1\dd n]$ entry, and thus we never need to traverse a chain or part thereof more than once. This ensures the linear running time of the algorithm.

\subsection{For Binary Alphabet}
We consider here the case of a binary alphabet $\Sigma$, where $\Sigma = \{a,b\}$ and $a < b$.
Then a binary string $\s{s}$ of length $|\s{s}|=n$, where $\s{s}$ is a non-letter string (i.e., $n > 1$), can be expressed as:

\begin{equation}\label{eq:-binary-rle}
 \s{s}[1 \dd n]=(a)^{p^\prime} (b)^{q\prime} (a)^{p_1} (b)^{q_1} \cdots (a)^{p_m} (b)^{q_m}
\end{equation}

\noindent for $0 \le p^\prime, p_1; \dd; p_m, q^\prime, q_1; \dd; q_m$.
Considering our interest in non-empty binary Lyndon strings, which are
border-free, we require the stronger condition that
$0 < p^\prime, q^\prime < n$
and
$0 \leq p_j, q_j$.
Let $\mathcal{L}$ be the set of binary Lyndon words.

Clearly, Equation. \ref{eq:-binary-rle} represents the \emph{Run Length Encoding} $RLE(\s{s})$ of a binary string $\s{s}$.
For example, if $\s{s} = (a) (b) (a a) (b) (a a a) (b b) (a a) (b) (a a) (b)$, then
$$RLE(\s{s}) = (a)^1 (b)^1 (a)^2 (b)^1 (a)^3 (b)^2 (a)^2 (b)^1 (a)^2 (b)^1$$

We are interested in the values of the parameters (exponents) in Equation. \ref{eq:-binary-rle} ($p^\prime, p_1; \dd; p_m, q^\prime, q_1; \dd; q_m$ ) which we will maintain in an auxiliary linked list $\mathcal R(\s{s})$. 
Note that $\mathcal R[0] =p^\prime $ and $\mathcal R[1] =q^\prime$.
So for the example string $\s{s}$ above, $\mathcal R(\s{s}) = \{ 1,1,2,1,3,2,2,1,2,1 \}$.

\begin{proposition}\label{prop:rle}
For a given string $\s{s} \in \Sigma^+$, the \emph{Run Length Encoding} $RLE(\s{s})$ and subsequently, the list $\mathcal R(\s{s})$ can be computed in time and space linear in the size of the given string $\s{s}$.
\end{proposition}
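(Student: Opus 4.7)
The plan is to establish the claim by exhibiting a straightforward single-pass algorithm and analyzing it. First I would describe a left-to-right scan of $\s{s}[1 \dd n]$ that maintains two state variables: the \emph{current character} $c$ (initialized to $\s{s}[1]$) and a \emph{run-length counter} $k$ (initialized to $1$). For each subsequent index $i = 2, 3, \dd, n$, I compare $\s{s}[i]$ with $c$: if they agree, increment $k$; otherwise, append the pair $(c, k)$ to $RLE(\s{s})$, append the exponent $k$ to $\mathcal R(\s{s})$, and then reset $c \gets \s{s}[i]$ and $k \gets 1$. After the loop terminates, a final finalization step flushes the last run.

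For the binary specialization needed later in Section~\ref{lba:sec:algorithm}, I would then observe that when $\Sigma = \{a,b\}$ with $\s{s}[1] = a$, the alternating structure of runs makes the output conform precisely to Equation~\ref{eq:-binary-rle}: the $\mathcal R$ list populates positions $\mathcal R[0] = p^\prime$, $\mathcal R[1] = q^\prime$, $\mathcal R[2] = p_1$, $\mathcal R[3] = q_1$, and so forth, simply by reading off the run lengths in order. (If $\s{s}[1] = b$ one may prepend a zero exponent, or begin with $q^\prime$; this boundary matter is cosmetic.) Since the linked-list structure allows $O(1)$ append operations, producing $\mathcal R(\s{s})$ from the scan adds only constant extra work per run.

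For the complexity analysis, each index $i$ of $\s{s}$ contributes $O(1)$ work (one character comparison and either an increment or a constant number of list-append operations). Hence the total time is $O(n)$. Regarding space, the number of runs in $RLE(\s{s})$ is at most $n$ (each run occupies at least one position of $\s{s}$), so both $RLE(\s{s})$ and $\mathcal R(\s{s})$ occupy $O(n)$ space, together with the $O(1)$ working variables $c$ and $k$.

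There is no substantive obstacle here: the claim reduces to the standard folklore fact that run-length encoding is a one-pass linear-time operation. The only point requiring minor care is the very last run, which is not triggered by a change in character and must be emitted by an explicit finalization after the loop; I would call this out explicitly in the write-up so the reader sees why no run is lost at the right end of $\s{s}$.
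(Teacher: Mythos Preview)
Your argument is correct: a single left-to-right scan with a current-character register and a run counter is exactly the standard linear-time, linear-space construction of $RLE(\s{s})$ and $\mathcal R(\s{s})$, and your handling of the final-run flush and the binary indexing convention is fine. The paper itself offers no proof of this proposition at all---it is stated as a folklore fact and left unproved---so there is nothing to compare against; your write-up simply supplies the missing (and entirely routine) justification.
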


Now we have the following Proposition to check if a given binary word is a Lyndon word or not (using the \emph{Run Length Encoding} of the binary word).

\begin{proposition}\label{prop:Lyndon-border-binary}
Let $\mathcal{L}$ be the set of Lyndon words over an alphabet $\Sigma$, where $\Sigma = \{a,b\}$ and $a < b$.
For a non-letter word $\s{\ell} = \s{\ell}_1\s{\ell}_2 \cdots \s{\ell}_n$ and its corresponding exponents list $\mathcal R(\mathcal L)$, we have the following:


\begin{description}
 \item [Case $ m = 0$,] we have $p_j, q_j = 0$ for $j \in \{1 \dd m\}$ then
  $\s{\ell} \in \mathcal{L}$ if and only if $p^\prime, q^\prime > 0$.

\item [Case $ m > 0$,] we have $p^\prime, q^\prime, p_j, q_j > 0$  then
   \begin{enumerate}
    \item $\s{\ell} \in \mathcal{L}$ if and only if $p^\prime > p_j$ for $j \in \{1 \dd m\}$.

    \item if $p^\prime = p_m$ and $p^\prime > p_j$ for $j \in \{1 \dd m-1\}$ then $\s{\ell} \in \mathcal{L}$ if and only if $q^\prime < q_m \wedge \lambda < 0$.

    \item if there exists $p_j$ such that $p^\prime =p_j$ and $p^\prime > p_m$ then $\s{\ell} \in \mathcal{L}$ if and only if $q^\prime \le q_j \wedge \lambda < 0$ for $j \in \{1 \dd m-1\}$.

   \end{enumerate}
\end{description}

where $\lambda$ is the index of Lyndon invalid point i.e.
$\mathcal{L}$-fail condition (As defined in
Lemma.~\ref{lemm:Lyndon-invalid-point-primitive} and Lemma.
\ref{lemm:Lyndon-invalid-point-bordered}).

\noindent
\end{proposition}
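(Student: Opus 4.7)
The plan is to invoke the standard characterization that $\ell\in\mathcal{L}$ iff $\ell$ is strictly lexicographically smaller than every non-empty proper suffix of $\ell$, and to classify the suffixes of $\ell$ according to where they begin within its RLE decomposition $a^{p'}b^{q'}a^{p_1}b^{q_1}\cdots a^{p_m}b^{q_m}$. First I would dispense with the easy suffixes: any suffix starting inside some $b$-run begins with $b>a$ and therefore exceeds $\ell$, and any suffix starting strictly inside the initial $a^{p'}$ takes the form $a^{p'-k}b^{q'}\cdots$ with $1\le k<p'$, which also exceeds $\ell$ because its $(p'-k+1)$-th character is $b$ while that of $\ell$ is still $a$. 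This reduces the task to comparing $\ell$ with the remaining suffixes $s_{j,k}=a^{p_j-k}b^{q_j}a^{p_{j+1}}b^{q_{j+1}}\cdots a^{p_m}b^{q_m}$ for $1\le j\le m$ and $0\le k<p_j$.

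Next I would run the position-by-position lexicographic comparison of $\ell$ against each $s_{j,k}$, obtaining a clean trichotomy: if $p_j-k<p'$ then position $p_j-k+1$ decides $\ell<s_{j,k}$; if $p_j-k>p'$ then position $p'+1$ decides $\ell>s_{j,k}$, so $\ell\notin\mathcal{L}$; and if $p_j-k=p'$, which forces $k=0$ and $p_j=p'$, then the first $p'$ characters agree and the comparison slips into the $b$-blocks. Plugging this trichotomy into the four statements yields the proposition: Case $m=0$ exhibits no $s_{j,k}$, so $p',q'>0$ is already sufficient; Case~1's hypothesis $p'>p_j$ for every $j$ forces $p_j-k<p'$ globally, so every $s_{j,k}$ is safe; Case~2 isolates the unique index $j^*=m$ with $p_{j^*}=p'$, reducing the surviving comparison to $b^{q'}a^{p_1}\cdots$ against the length-$q_m$ word $b^{q_m}$, and this is equivalent to $q'<q_m$; Case~3 isolates some $j^*<m$ with $p_{j^*}=p'$ and reduces the surviving comparison to $b^{q'}a^{p_1}\cdots$ against $b^{q_{j^*}}a^{p_{j^*+1}}\cdots$, for which $q'<q_{j^*}$ suffices immediately, whereas $q'=q_{j^*}$ forces the comparison to propagate into the subsequent internal blocks.

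The main obstacle, and the reason the $\lambda$-condition appears, lies in that propagation step: once agreement persists past the first $p'+q'$ characters, the residual tail-alignment may still resolve either way, and Lemmas~\ref{lemm:Lyndon-invalid-point-primitive} and~\ref{lemm:Lyndon-invalid-point-bordered} are precisely the devices that detect a wrong-way resolution by producing an index $\lambda\ge 0$ at which the $\mathcal{L}$-fail condition triggers. I would close the argument by induction on $m-j^*$, the number of remaining run-pairs to compare: the inductive step reduces the residual tail-comparison to a strictly smaller instance of the same structural classification (with fewer internal blocks to traverse), while the base case appeals directly to the two invalid-point Lemmas. Conversely, in each failed sub-case necessity is immediate by exhibiting the specific $s_{j,k}$ that witnesses $\ell\ge s_{j,k}$, thereby completing the four-case characterization.
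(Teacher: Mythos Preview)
The paper states this proposition twice (once in Section~\ref{lba:sec:combinatorics} and again in the binary-alphabet subsection) but supplies no proof in either place; it is simply asserted and then used. So there is no argument in the paper to compare your attempt against.

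On its own merits, your approach is the natural one and is sound in outline: reduce the Lyndon condition to the suffix characterization, classify suffixes by the RLE block in which they begin, discard those starting with $b$ or strictly inside the leading $a^{p'}$, and then run the block-level trichotomy on the remaining candidates $s_{j,k}$. Two remarks. First, the proposition as written is not internally consistent: item~(1) of Case $m>0$ is phrased as a biconditional, yet items~(2) and~(3) exhibit Lyndon words violating the right-hand side of that biconditional. Any proof therefore has to adopt a charitable reading (treating (1)--(3) as a case split rather than three independent equivalences), which your sketch does tacitly. Second, in your Case~2 analysis the surviving comparison against $s_{m,0}=a^{p_m}b^{q_m}$ already settles the matter once $q'<q_m$, so the additional $\lambda<0$ clause in the proposition is redundant there; your deferral of the $\lambda$ business to Lemmas~\ref{lemm:Lyndon-invalid-point-primitive} and~\ref{lemm:Lyndon-invalid-point-bordered} is consistent with how the paper invokes those lemmas elsewhere, but you should not expect the $\lambda$ condition to carry independent content in every sub-case.
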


\begin{lemma}\label{lemm:Lyndon-prefix-binary} For a given string $\s{s}$ of length $n=|\s{s}|$, let the list $\mathcal R$ be the list of exponents of \emph{Run Length Encoding} of $\s{s}$, using the list $\mathcal R$ it can be checked for each proper prefix of $\s{s}$ if it is a Lyndon word or not (computing the array $\Psi(\s{s})$) in $O(n)$ time .
\end{lemma}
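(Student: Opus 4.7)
My plan is to give an incremental algorithm that, after computing $\mathcal R(\s{s})$ in $O(n)$ time via Proposition~\ref{prop:rle}, scans the prefixes of $\s{s}$ in order and decides membership in $\mathcal L$ in amortized constant time per position, so that $\Psi(\s{s})$ is filled in $O(n)$ overall. The driving idea is that as $i$ advances by one, the RLE shape of the prefix $\s{s}[1\dd i]$ changes only by extending the current run by one or by opening a new run; everything needed to apply Proposition~\ref{prop:Lyndon-border-binary} can then be updated by local bookkeeping.

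First I would classify the shape of $\s{s}[1\dd i]$ from $\mathcal R$. The prefix is either (i) $a^{k}$ with $1 \le k \le p'$, (ii) $a^{p'} b^{k}$ with $1 \le k \le q'$, or (iii) of the full form of Equation~\ref{eq:-binary-rle} with the last run possibly truncated. Shapes (i) and (ii) are settled directly by the first bullet of Proposition~\ref{prop:Lyndon-border-binary}: $\Psi[i]=\mathrm{true}$ exactly when $k=1$ in case (i) and for every $k$ in case (ii). For shape (iii), I would maintain the running maximum $P^{\ast}=\max\{p_1,\ldots,p_m\}$, a three-valued flag indicating whether $P^{\ast}<p'$, $P^{\ast}=p'$, or $P^{\ast}>p'$, the index $j^{\ast}$ at which the latest tie $p_{j^{\ast}}=p'$ occurs (if any), and the current status of the Lyndon invalid point $\lambda$ of Lemmas~\ref{lemm:Lyndon-invalid-point-primitive} and \ref{lemm:Lyndon-invalid-point-bordered}. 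When a new character either lengthens the last run or opens a new run, each of these quantities is updated in $O(1)$ by a short case analysis on whether the new symbol is $a$ or $b$. Under these invariants, each of the three sub-cases of Proposition~\ref{prop:Lyndon-border-binary} collapses to a constant number of comparisons among the stored values, yielding $\Psi[i]$ in $O(1)$.

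The step I expect to be the main obstacle is handling the tie sub-cases (2) and (3) of Proposition~\ref{prop:Lyndon-border-binary}, where the verdict depends on the Lyndon invalid point $\lambda$ and therefore seems to require inspecting the whole prefix. I would circumvent this by exploiting Lemmas~\ref{lemm:Lyndon-invalid-point-primitive} and \ref{lemm:Lyndon-invalid-point-bordered}: once $\lambda$ fires at some index $i_{0}$, every right extension of $\s{s}[1\dd i_{0}]$ is non-Lyndon, so I can stop maintaining $\lambda$ and simply write $\Psi[i]=\mathrm{false}$ for all $i\ge i_{0}$ in a final sweep. This means $\lambda$ is tested at most once per position, and the single test at each position is constant time in the maintained quantities.

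Putting the pieces together, the RLE is built in $O(n)$, the per-position update of $(P^{\ast},\text{flag},j^{\ast},\lambda)$ and the resulting application of Proposition~\ref{prop:Lyndon-border-binary} cost $O(1)$ each for at most $n$ positions, and the tail-fill after the Lyndon invalid point is another $O(n)$, giving the claimed $O(n)$ bound on computing $\Psi(\s{s})$ from $\mathcal R(\s{s})$.
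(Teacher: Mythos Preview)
Your proposal is correct and follows essentially the same approach as the paper: invoke Proposition~\ref{prop:Lyndon-border-binary} for each prefix and argue that the check is (amortized) constant time, giving $O(n)$ overall once $\mathcal R(\s{s})$ is built. The paper's own proof is considerably terser---it simply asserts the constant-comparisons claim without spelling out the incremental bookkeeping you describe---so your version is, if anything, a more careful justification of the same idea.
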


\begin{proof}(\textbf{Lemma \ref{lemm:Lyndon-prefix-binary}})
We first focus on computing the $i$-th element $\Psi[i](\s{s})$ of the array $\Psi(\s{s})$.

Clearly, by Proposition. \ref{prop:Lyndon-border-binary}, it takes constant
number of comparisons to determine whether the prefix $\s{s}[1 \dd i]$ is a Lyndon word or not.

Since we have to repeat the steps in
Proposition.\ref{prop:Lyndon-border-binary} for $n$ prefixes, therefore
$\Psi(\s{s})$ can be done in linear time once $\mathcal R(\s{s})$ is computed.

\qed
\end{proof}

While computing the border array $\beta (\s{s})$ (using the Algorithm in \cite{L05}),
we can check conditions (\ref{obs:Lyndon-Word:primitive}), (\ref{obs:Lyndon-Word:unit-factor}) \& (\ref{obs:Lyndon-Word:prefix}) in Observation. \ref{obs:Lyndon-Word} in constant time,
by checking if $\s{\ell}_1 < \s{\ell}_n$ and $ \beta[|\s{\ell}|] \leq 0$.

We can check condition (\ref{obs:Lyndon-Word:suffix}) (Observation. \ref{obs:Lyndon-Word}) by computing the parameters according to Proposition. \ref{prop:Lyndon-border-binary} and checking the $\mathcal{L}$-fail conditions (Lemma.\ref{lemm:Lyndon-invalid-point-primitive} \& \ref{lemm:Lyndon-invalid-point-bordered}).


So, we will evaluate the parameters ($p^\prime, p_1; \dd; p_m, q^\prime, q_1; \dd; q_m$) while computing the lists $\beta[i]$, $\mathcal R$ and $\mathcal L \beta[i]$ simultaneously.

Note that the $i$-th entry of \emph{Lyndon Border Array} $\mathcal L
\beta[i]$ correspond to one of the previous borders (that is also a  Lyndon
word) and hence  we need to consult the value $\Psi[\beta[i]]$, we will set
$\mathcal L\beta[i] = \beta[i]$ if  $\Psi[\beta[i]]=true$. Otherwise, the
algorithm recursively finds the longest border in the chain of borders (at
position $i$) that is also a Lyndon word.

An outline of the algorithm is as follows:
\begin{enumerate}
 \item Compute the border array $\beta(\s{s})$.
 \item Compute the \emph{Run Length Encoding} and subsequently the list $\mathcal R(\s{s})$.
 \item Compute $\Psi(\s{s})$ (Proposition. \ref{prop:Lyndon-border-binary}). 
 \item Check for each prefix of the input string whether or not there exists a border of length $b \not= 0$ such that $\s{s}[1 \dd b] \in \mathcal L$.
\end{enumerate}

A binary Lyndon word can also be expressed in terms of Lyndon properties of the integer parameters (exponents) given by its \emph{Run Length Encoding}. For a binary string $\s{\ell}$, let $\mathcal{R}(\s{\s{\ell}_a})$
 denote the encoding (as a string) of the subsequence of $\s{\ell}$ consisting of all letters $a$ but no letter $b$ ($p^\prime, p_1; \dd; p_m$), similarly $\mathcal{R}(\s{\s{\ell}_b})$ denotes the encoding of the subsequence of $\s{\ell}$ consisting of all letters $b$ ($q\prime, q_1; \dd; q_m$).

\begin{lemma}\label{lemm:Lyndon-encoding-binary}
Let $\s{\ell}$ be a binary word with $\s{\ell}[1]=a$, $\s{\ell}[n]=b$ and associated encodings $\mathcal{R}(\s{\s{\ell}_a})$ and $\mathcal{R}(\s{\s{\ell}_b})$. Then $\s{\ell}$ is a Lyndon word if and only if, either \\
(i) $\mathcal{R}(\s{\s{\ell}_a})$ is a Lyndon word on the alphabet $\{1>2>3> \cdots \}$, or \\
(ii) $\mathcal{R}(\s{\s{\ell}_a})$ is a repetition of a Lyndon word as in (i) and $\mathcal{R}(\s{\s{\ell}_b})$ is a Lyndon word on the alphabet $\{1<2<3< \cdots \}$.

\end{lemma}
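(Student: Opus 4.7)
The plan is to reduce the Lyndon property of the binary word $\s{\ell}$ to Lyndon properties of the two integer sequences $\RLE(\s{\ell}_a)=p'\,p_1\cdots p_m$ and $\RLE(\s{\ell}_b)=q'\,q_1\cdots q_m$. The driving observation is an elementary lex fact: for two binary strings both starting with $a$, the one with the \emph{longer} initial $a$-run is strictly lex-smaller (at the position where its run ends, the other still has an $a$ and $a<b$), while after matching initial $a$-runs the one with the \emph{longer} subsequent $b$-run is strictly lex-larger. This asymmetry is exactly what justifies comparing $\RLE(\s{\ell}_a)$ under the reversed order $\{1>2>3>\cdots\}$ and $\RLE(\s{\ell}_b)$ under the standard order $\{1<2<3<\cdots\}$.

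First I would classify the cyclic rotations of $\s{\ell}$ into three types: (A) those whose start lies in or at the beginning of a $b$-run --- trivially lex-greater than $\s{\ell}$ since they begin with $b$; (B) those whose start lies strictly inside some $a$-run --- these begin with a truncated $a$-run followed immediately by a $b$, and the elementary lex fact shows any such rotation is strictly lex-greater than the ``clean'' rotation obtained by advancing to the end of the current $a$-run; and (C) ``clean'' rotations beginning at the first position of some $a$-run. Only class (C) needs analysis. For the clean rotation $\s{\ell}^{(j)}$ starting at the $j$-th $a$-run, a position-by-position comparison yields the key equivalence: $\s{\ell}<\s{\ell}^{(j)}$ iff, in the interleaved comparison of $(p',q',p_1,q_1,\ldots)$ against the cyclic shift $(p_j,q_j,p_{j+1},q_{j+1},\ldots)$, the first position of disagreement has either a strictly larger $p$-entry or a strictly smaller $q$-entry on the $\s{\ell}$-side (under standard integer order).

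From here sufficiency drops out: under (i), every clean rotation is decided at its first $p$-disagreement, which favors $\s{\ell}$ directly by Lyndon-ness of $\RLE(\s{\ell}_a)$ on the reversed alphabet; under (ii), writing $\RLE(\s{\ell}_a)=w^r$ with $w$ Lyndon on the reversed alphabet, clean rotations by non-multiples of $|w|$ run-pairs are decided on $p$-entries (by Lyndon-ness of $w$), while those by multiples of $|w|$ agree on all $p$-entries and are settled on $q$-entries using Lyndon-ness of $\RLE(\s{\ell}_b)$ on the standard alphabet. For necessity I would contrapose: if $\RLE(\s{\ell}_a)$ is neither reversed-Lyndon nor a repetition of one, some cyclic shift is strictly reversed-smaller, which lifts to a lex-smaller clean rotation of $\s{\ell}$; if instead (ii)'s first clause holds but $\RLE(\s{\ell}_b)$ fails standard-Lyndon, a rotation realising that failure yields a lex-smaller clean rotation at a matching-$p$ position. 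I expect the main obstacle to be the class-(B) reduction: one must argue cleanly that a rotation starting strictly inside an $a$-run of length $p_i$ is strictly lex-greater than the clean rotation beginning at the \emph{end} of that run, and do so without any hidden assumptions on the relative ordering of the $p_i$'s.
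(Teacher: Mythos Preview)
Your overall plan --- classify rotations, reduce the ``inside an $a$-run'' case to clean rotations, and translate clean-rotation comparisons into comparisons of the two RLE sequences --- is exactly the mechanism the paper has in mind; the paper's own argument is much terser than yours and in fact mislabels the two directions, so on sufficiency you are giving a more complete version of the same proof. One small slip: in your type-(B) reduction you should compare with the clean rotation at the \emph{start} of the current $a$-run, not the end (the start has the longer initial $a$-block, hence is the smaller of the two).

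There is, however, a genuine gap in your necessity argument, and it is not fixable as written. In the second sub-case of your contraposition you assume $\RLE(\s{\ell}_a)=w^r$ with $w$ reversed-Lyndon and $\RLE(\s{\ell}_b)$ not standard-Lyndon, and assert that ``a rotation realising that failure yields a lex-smaller clean rotation at a matching-$p$ position.'' But a clean rotation of $\s{\ell}$ by $k$ run-pairs has matching $p$-entries only when $k$ is a multiple of $|w|$; for all other $k$ the $p$-comparison already decides in favour of $\s{\ell}$, so such rotations cannot witness non-Lyndonness. Nothing forces the shift that makes $\RLE(\s{\ell}_b)$ smaller to be a multiple of $|w|$. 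Concretely, take
\[
\RLE(\s{\ell}_a)=(3,1,2,3,1,2)=(3,1,2)^2,\qquad \RLE(\s{\ell}_b)=(1,3,2,2,1,1).
\]
Here $(3,1,2)$ is Lyndon on $\{1>2>3>\cdots\}$, so (i) fails; and $\RLE(\s{\ell}_b)$ is not Lyndon on $\{1<2<3<\cdots\}$ (the suffix $(1,1)$ is smaller), so (ii) fails. The only nontrivial shift of $\RLE(\s{\ell}_b)$ by a multiple of $|w|=3$ is $(2,1,1,1,3,2)$, which is \emph{larger}. And indeed the corresponding word
\[
\s{\ell}=a^3b\,a\,b^3\,a^2b^2\,a^3b^2\,a\,b\,a^2b=aaababbbaabbaaabbabaab
\]
\emph{is} a Lyndon word: every proper suffix starting with $a$ has an initial $a$-run of length at most $2$ except the suffix at position $13$, and that one agrees with $\s{\ell}$ on $a^3b$ and then has $b$ where $\s{\ell}$ has $a$. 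So the biconditional fails in this direction; the paper's proof does not address it either, and the statement as formulated cannot be rescued without an extra hypothesis (e.g.\ restricting the $b$-condition to shifts by multiples of $|w|$, or phrasing (ii) recursively).
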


\begin{proof}(\textbf{Lemma \ref {lemm:Lyndon-encoding-binary}})
First we consider necessity.
So suppose that (i) holds. Consider any rotation $\s{\s{\ell}_a^r}$ of $\s{\s{\ell}_a}$ (including those with split runs of $a$'s). Then $\mathcal{R}(\s{\s{\ell}_a}) < \mathcal{R}(\s{\s{\ell}_a^r})$ in lexorder over $\{1>2>3> \cdots \}$. Now suppose that (ii) holds. Then for a rotation $\s{\s{\ell}_a^r}$ of $\s{\s{\ell}_a}$, either $\mathcal{R}(\s{\s{\ell}_a}) < \mathcal{R}(\s{\s{\ell}_a^r})$ in lexorder over $\{1>2>3> \cdots \}$, or $\mathcal{R}(\s{\s{\ell}_a}) = \mathcal{R}(\s{\s{\ell}_a^r})$ and $\mathcal{R}(\s{\s{\ell}_b}) < \mathcal{R}(\s{\s{\ell}_b^r})$ in lexorder over $\{1<2<3< \cdots \}$.

For sufficiency, the conditions guarantee that $\s{\ell} \in \mathcal{L}$.

\qed
\end{proof}

\subsubsection{Run time analysis (For Binary Alphabet)}
Computing $\beta(\s{s})$ can be done in linear time Proposition. \ref{prop:border-array}. Also, \emph{run length encoding} (hence the list $\mathcal R$ (the exponents $p^\prime, p_1; \dd; p_m, q^\prime, q_1; \dd; q_m$))
 Proposition. \ref{prop:rle} and the array $\Psi(\s{s})$ can be done in linear time according to Proposition. \ref{prop:Lyndon-border-binary} and Lemma. \ref{lemm:Lyndon-prefix-binary} respectively.

Consequently \emph{Lyndon Border Array} can be computed in linear time once $\beta(\s{s})$, $\mathcal R(\s{s})$ $\Psi(\s{s})$ is computed.

The algorithm requires extra space for keeping the arrays $\Psi(\s{s})$ and $\beta(\s{s})$ (each of length $n$).
Additionally, the algorithm uses an auxiliary extra space to maintain the list $\mathcal R(\s{s})$ of length at most $n$. Therefore the total space required to run the procedure is linear to the length of the given string.

Similar time/space efficiency can be achieved to compute \emph{Lyndon Border Array} by simply applying Duval's algorithm~\cite{Du83} on the exponents of $a$'s and then if necessary on those for $b$'s (the concept presented in Lemma. \ref{lemm:Lyndon-encoding-binary}).

Lemma~\ref{lemm:Lyndon-encoding-binary} shows that a binary string can be decomposed into its Lyndon factors with a double application of Duval's linear algorithm \cite{Du83}, first on the exponents of $a$ and then if necessary on those for $b$.

\section{Lyndon Suffix Array Computation}\label{lba:sec:suffix}
The well-known suffix array of a string records the lexicographically sorted list of all of its suffixes. Our next contribution is to show how the \emph{Lyndon Suffix Array}, like the original suffix array, can be constructed in linear time; for a string of length $n$ it follows that the indexes in the Lyndon variant will be a subset of $\{1, 2, \dots, n\}$. We will exploit the elegant fact that Lyndon suffixes are nested:

\begin{fact}
\label{fact:suff_array}
If the given string \s{s} is a Lyndon word, by Lyndon properties of Lyndon suffixes, the indexes in the Lyndon Suffix Array will necessarily be increasing.
\end{fact}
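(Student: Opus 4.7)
The plan is to prove this directly from the defining property of Lyndon words: a non-empty word $\s{w}$ is a Lyndon word if and only if it is strictly smaller (in lexicographic order) than each of its non-empty proper suffixes, as stated in the preliminaries. The claim then reduces to showing that for any two entries $i < j$ of the Lyndon Suffix Array of $\s{s}$, we have $\s{s}[i \dd n] < \s{s}[j \dd n]$, so that sorting these suffixes lexicographically places them in order of increasing starting index.

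First I would fix two indices $i, j$ with $1 \le i < j \le n$ such that both $\s{s}[i \dd n]$ and $\s{s}[j \dd n]$ are Lyndon words (hence both appear in the Lyndon Suffix Array). Since $j > i$, the word $\s{s}[j \dd n]$ is a proper suffix of $\s{s}[i \dd n]$. Because $\s{s}[i \dd n]$ is a Lyndon word, the characterization recalled in Section~\ref{lba:sec:preliminaries} (and used in Observation~\ref{obs:Lyndon-Word}(\ref{obs:Lyndon-Word:suffix})) gives $\s{s}[i \dd n] < \s{s}[j \dd n]$. Thus the lexicographic order on Lyndon suffixes agrees with the natural order on their starting positions.

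In particular, when $\s{s}$ itself is a Lyndon word, position $1$ is present in the array and its suffix $\s{s}$ is strictly smaller than every other (Lyndon or non-Lyndon) suffix, so the sorted list begins at index $1$ and the indices increase from there. Iterating the two-index argument above across all pairs in the Lyndon Suffix Array yields a strictly increasing sequence of indices.

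There is no real obstacle here: the statement follows from a single invocation of the Lyndon-vs-proper-suffix inequality applied to the left element of each consecutive pair. The only subtlety worth noting is that it suffices to know that the \emph{earlier} suffix in each compared pair is Lyndon; we do not need the hypothesis that $\s{s}$ itself is Lyndon for the ordering conclusion, although that hypothesis does guarantee that index $1$ is included.
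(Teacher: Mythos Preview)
Your argument is correct and matches the paper's intent: the paper states this as a fact without a separate proof, merely gesturing at ``Lyndon properties of Lyndon suffixes'' within the statement itself, and your proof is precisely the one-line unpacking of that phrase via Observation~\ref{obs:Lyndon-Word}(\ref{obs:Lyndon-Word:suffix}). Your closing remark that the hypothesis on $\s{s}$ being Lyndon is not actually needed for the monotonicity of the indices (only for the inclusion of index~$1$) is also correct and worth noting.
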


In order to efficiently construct the Lyndon suffix array we could directly modify the linear-time and space efficient method of Ko and Aluru \cite{KA03} given for the original data structure -- this would involve lex-extension ordering (lexorder for substrings, see \cite{DS14}) along with Fact \ref{fact:suff_array}. We note that the Ko-Aluru method has also recently been adapted to non-lexicographic $V$-order and $V$-letters,
and applied in a novel Burrows-Wheeler transform \cite{DS14}, and hence is quite a versatile technique.
 -- since the modification here will be similar we will just outline the main steps.

Let an {\it $L$-letter} $\s{\ell} = \ell_1 \ell_2 \dd \ell_m$ substring denote the simple case of a Lyndon word such that $\ell_1 < \ell_i$ for $2 \le i \le m$, assumed to be of maximal length, that is, $\ell_1 = \ell_{m+1}$ (if $\ell_{m+1}$ exists); hence $|\s{\ell}| \ge 1$.

Since, apart from the last letter, there may not be Lyndon suffixes of a
string, we perform a linear scan to record the locations of the minimal letter $\ell_1$, say, in the string \s{s}.  Observe also that either an $L$-letter is a Lyndon suffix, or it is the prefix of a Lyndon suffix - the point is that an $L$-letter is a well-defined chunk of text, a substring of the input \s{s}, as opposed to the classic single letter approach. In order to sort chunks of text lexicographically, we will apply \emph{lex-extension} order defined as follows.

\begin{definition} Suppose that according to some factorization $\mathcal F$, two strings
$\s{u, v} \in \Sigma^+$ are expressed in terms of nonempty factors:
$\s{u} = u_1 u_2 \cdots u_m, \s{v} = v_1 v_2 \cdots v_n$.
Then $\s{u} <_{LEX(F)} \s{v}$ if and only if one of the following holds:\\
(1) \s{u} is a proper prefix of \s{v} (that is, $u_i = v_i$ for $1 \le i \le m < n)$; or\\
(2) for some $i \in 1\dd min(m, n), u_j = v_j$ for $j = 1, 2, \dd , i-1$, and $u_i < v_i$ (in lexicographic order).
\end{definition}

 First, using a linear scan we apply the $\mathcal L \beta$ to record the indexes of all the Lyndon suffixes of $\s{s}$.
The factorization $\mathcal F$ which we will use is that of decomposing the input into substrings of $L$-letters; in the case of unit length  $L$-letters we concatenate them so that $\mathcal F$ has the general form:
$$ \s{u} \ell_{1}^{i_1} \s{\ell_1} \ell_{1}^{i_2} \s{\ell_2} \cdots \ell_{1}^{i_t} \s{\ell_t}$$
where $\s{u} \in \Sigma^*$ does not contain $\ell_{1}$, $t \ge 1$, each $i_j \ge 0$, every $\s{\ell_j}$ is an $L$-letter and $|\s{\ell_t}| \ge 1$.
-- in practice this just entails keeping track of occurrences of the letters $\s{\ell}_1$ in \s{s}.
Since we are computing Lyndon suffixes we can ignore $\s{u}$ and hence assume that $\mathcal F$ has the form
$\ell_{1}^{i_1} \s{\ell_1} \ell_{1}^{i_2} \s{\ell_2} \cdots \ell_{1}^{i_t} \s{\ell_t}$.

An $\ell$-suffix is a suffix of $\mathcal F$ commencing with $\ell_{1}^{i_j}$. We now apply the Ko-Aluru linear method, consisting of three main steps, to the $\ell$-suffixes:
which we first outline followed by further detail for each.

\begin{itemize}
\item Using a linear scan of the input string $\s{s}$ and lex-extension
ordering, divide all $\ell$-suffixes of \s{s} into two types: 
That is, let $\s{s_i}$ denote the suffix starting at index $i$, so
$\s{s_i}=\s{s}[i \dd n]$. Then the type $S$ Lyndon suffixes are the set 
$\{\s{s_i} | \s{s_i} < \s{s_{i+1}}\}$ and the type $L$ Lyndon sufixes are the
set $\{\s{s_j} | \s{s_j} > \s{s_{j+1}}\}$.

\item Sort by lex-extension  all $\ell$-suffixes of type $S$ in $O(n)$-time
using a modified Bucket Sort followed by recursion on at most half of the
string.

\item Using a linear scan obtain the lex-extension order of all remaining
$\ell$-suffixes (assumed to be type $L$) from the sorted ones. This step is
obtained from observing that the type $L$ $\ell$-suffixes occurring in \s{s}
between two type $S$ $\ell$-suffixes, $S_i$ and $S_j$ where $i<j$, are already
ordered such that $L_l <_{LEX(\mathcal F)} L_k$ for $i<k<l<j$.

\end{itemize}

At this stage we have computed an $\ell$-suffix array. There are two final steps for computing the Lyndon suffix array. Firstly, the classic suffix array for the last $L$-letter $\s{\ell_t}$ (without the prefix $\ell_1$) is processed directly using the Ko-Aluru method and the indexes inserted into the $\ell$-suffix array. Then we perform a linear scan of the $\ell$-suffix array, and applying Fact \ref{fact:suff_array}, by selecting only the sequence of increasing integers yields the Lyndon suffix array.

The last suffix is both type $S$ and $L$.
The modification from lexicographic to lex-extension order follows from, firstly the linear factorization of the input into $L$-letters, and secondly that lex-extension ordering applies lexicographic order pairwise to $L$-letter substrings which each requires no more than time linear in the length of the $L$-letters -- hence $O(n)$ overall. The space efficiency follows from the original method \cite{KA03}.

\subsection{A Simpler algorithm for computing a Lyndon Suffix Array from a Suffix Array}\label{lba:subsec:simpleLSA}
We present an alternative simple algorithm, derived from the classic suffix array, which also exploits the nested structure expressed in Fact \ref{fact:suff_array}. Suppose we are given the suffix array of the string $\s{s}$. Now our algorithm finds the largest suffix (max), and then searches inside it to find the
second largest one and so on, taking advantage of the fact that the suffixes are already sorted in the suffix array.
Repeatedly finding the max value can be implemented efficiently using the Range Minimum Query (RMQ)  \cite{DBLP:conf/latin/BenderF00}, which requires $O(n)$ time pre-processing and then $O(1)$ time for each query.

\begin{lemma}{\textbf{Maximum range suffixes are Lyndon words:}}\label{lemm:Lyndon-RMQ}
Suppose we are given the suffix array $\mathcal A$ of a string $\s{s}$. Let the set $\mathcal M$ be the set of maximum values of the range of suffixes $\mathcal A[0\dd i]$, where $i$ is the index of the $i$-th suffix $\ell_i$.
Each suffix $\ell \in \mathcal M$ is a Lyndon word.
\end{lemma}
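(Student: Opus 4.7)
\noindent \textbf{Proof proposal for Lemma~\ref{lemm:Lyndon-RMQ}.} The plan is to translate the running-maximum condition that defines $\mathcal M$ into a statement about the lexicographic comparison of the suffixes of $\s{s}$, and then invoke the standard suffix characterization of a Lyndon word.

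First, unpack the definition of $\mathcal M$. A position $\ell$ lies in $\mathcal M$ exactly when there is some $k$ with $\mathcal A[k] = \ell$ and $\ell > \mathcal A[j]$ for every $j < k$; equivalently, no position strictly greater than $\ell$ appears in the prefix $\mathcal A[0 \dd k]$ of the suffix array. Since $\mathcal A$ is a permutation of the positions of $\s{s}$, this is the same as saying that every position $m > \ell$ has suffix-array rank strictly greater than $k$. Translating ranks back into suffixes gives that $\s{s}[\ell \dd n]$ is lexicographically strictly smaller than $\s{s}[m \dd n]$ for every $m$ with $m > \ell$; strictness is automatic, because distinct suffixes of $\s{s}$ are distinct words.

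Next, apply the characterization recalled in Section~\ref{lba:sec:preliminaries} immediately after Observation~\ref{obs:Lyndon-Word}, namely that a non-empty word is a Lyndon word if and only if it is strictly lex-smaller than each of its non-empty proper suffixes. Every proper non-empty suffix of $\s{s}[\ell \dd n]$ has the form $\s{s}[m \dd n]$ with $m > \ell$, so the inequality derived in the previous paragraph is precisely the Lyndon condition applied to $\s{s}[\ell \dd n]$. Primitivity then comes for free: if $\s{s}[\ell \dd n] = \s{u}^k$ with $k \ge 2$, the proper suffix $\s{u}^{k-1}$ would also be a prefix and hence lex-smaller, contradicting strict suffix-minimality. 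Thus $\s{s}[\ell \dd n]$ is a Lyndon word for every $\ell \in \mathcal M$.

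The main subtlety I expect is the book-keeping around indexing conventions (the paper uses $1$-based positions in its general text but $0$-based indexing in the suffix array and in the example) and making sure strictness is preserved on both sides of the translation between suffix-array prefixes and sets of starting positions. Once that is in place the argument collapses to the chain of equivalences ``$\ell$ is a new running maximum of $\mathcal A$'' $\Longleftrightarrow$ ``$\mathrm{rank}(\ell) < \mathrm{rank}(m)$ for every $m > \ell$'' $\Longleftrightarrow$ ``$\s{s}[\ell \dd n]$ is strictly smaller than each of its proper non-empty suffixes''. Reading this chain in the opposite direction also yields the converse (every starting position of a Lyndon suffix belongs to $\mathcal M$), which is what actually justifies the RMQ-based algorithm and is worth noting even though it is not explicitly part of the lemma statement.
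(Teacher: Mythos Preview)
Your argument is correct and is essentially the same as the paper's: both translate ``$\ell$ is a running maximum of $\mathcal A$'' into ``every position $m>\ell$ has strictly larger suffix-array rank,'' and then read this as the suffix characterization of a Lyndon word. The only cosmetic difference is that the paper phrases the last step as a contradiction (assume some proper suffix $\s{\ell}''$ of $\s{\ell}$ satisfies $\s{\ell}'' < \s{\ell}$ and derive a violation of maximality), whereas you give the direct forward chain; your added remarks on primitivity and the converse go slightly beyond what the paper writes but do not change the approach.
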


\begin{proof}[Lemma~\ref{lemm:Lyndon-RMQ}]
Suppose $\ell$ is a max range suffix ($\ell \in \mathcal M$) at index $i$ with order value $SA[i] = m$.
  Suppose there exists a suffix $\s{\ell}^{\prime\prime}$ of $\s{\ell}$ at index $i^{\prime\prime}$ (w.r.t. to $\mathcal A$) with order value $\mathcal A[i^{\prime\prime}] = m^{\prime\prime}$, and also suppose $\s{\ell}^{\prime\prime} < \s{\ell}$. Hence $m^{\prime\prime} > m$ ($\mathcal A[i^{\prime\prime}]> \mathcal A[i]$) where $i^{\prime\prime} < i$, which contradicts the fact that $m$ is the maximum value in the range $\mathcal A[0\dd i]$. Therefore, $\s{\ell}$ is strictly smaller than all of its proper suffixes --
we conclude that $\ell$ is a Lyndon word.

\qed
\end{proof}

The linear-time method for computing the Lyndon suffix array is very simple. Below, we first outline the steps followed by the pseudo-code.


\begin{enumerate}
  \item Compute the suffix array $\mathcal A$ of $\s{s}\$$.
  \item Find the value $max = \text{MAX}(\mathcal A[0\dd n])$ and its index $i$ in the suffix array, then add the value $max$ to Lyndon Suffix array $\mathcal LS$.
  \item Find the value $max$ and its index $i$ in the range $\mathcal A[0\dd i]$, $max = \text{MAX}(\mathcal A[0 \dd i])$ and $i= \text{Indexof}(max, \mathcal A)$, then add the value $max$ to $\mathcal LS$.
  \item Repeat step 2 until the set $\mathcal A[0\dd i]$ is empty.
\end{enumerate}

\begin{algorithm}[H]
\caption{Computing the Lyndon Suffix Array from a Suffix Array.}\label{algo:simple-LSA}
\begin{algorithmic}
\Procedure{ComputeLSA}{$\s{s}$}
\State $n \leftarrow |s|$; \ $\mathcal L[1\dd n] \leftarrow (-1)^n$	
\State $\vartriangleright$\text{ compute the suffix array of \s{s}\$}
\State $\mathcal A \leftarrow SA(\s{s}\$)$
\State $i\leftarrow n$; $j\leftarrow n$; $max\leftarrow 0$
\While{$(\mathcal A[0\dd i]  \not= \emptyset)$}
\State $\vartriangleright$\text{ find the maximum value (and its index) in the range $\mathcal A [0\dd i]$.}
\State $(max, i) \leftarrow (\mathcal A[idx], idx) \ | \ idx = \operatorname{arg\,max}_{idx} (A[idx]) \text{ for }  0 \leq idx < i $
\State $j \leftarrow j-1$
\State $\mathcal L[j] \leftarrow max$
\EndWhile
\State return $\mathcal L$
\EndProcedure
\end{algorithmic}
\end{algorithm}


\section{Conclusion}\label{lba:sec:conclusion}
In this article, we have extended two well-known data structures in
stringology. We first have adapted the concept of a border array to
introduce the \itbf{Lyndon Border Array} $\mathcal L \beta$ of a string
$\s{s}$, and have described a linear-time and linear-space algorithm for
computing $\mathcal L \beta (\s{s})$. Furthermore, we have defined the
\itbf{Lyndon Suffix Array}, which is an adaptation of the classic suffix
array. By modifying the linear-time construction of Ko and Aluru
\cite{KA03} we similarly achieve a linear construction for our Lyndon
variant. We also present a simpler algorithm to construct a \itbf{Lyndon
Suffix Array} from a given Suffix Array.

The potential value of the Lyndon Border Array is that it allows for deeper
burrowing into a string to yield paired Lyndon patterned substrings. The
Lyndon suffix array lends itself naturally to searching for Lyndon patterns
in a string.
If the given text or string has a sparse number of Lyndon words (as likely
in English literature due to the vowels $a,e$ often occurring internally in
words), then the Lyndon suffix array may offer efficiencies. Polyrhythms,
or cross-rhythms,
are when two or more independent rhythms play at the same time -- nested
Lyndon suffixes can exist in these rhythms. We propose that applications of
these specialized data structures might arise in the context of the
relationship existing between de Bruijn sequences and Lyndon words
\cite{Necklaces78}.

\ifC
In future work, we will focus on studying applications of \itbf{Lyndon
Border Array} in relation to the Christoffel words.
\fi
We propose that applications of \itbf{Lyndon Border Array} may arise in
combinatorics  in relation to the Christoffel words.
The methods used for these problems often make use of structures equivalent
to suffix trees in order to achieve efficient execution.

\bibliographystyle{alpha}
\bibliography{ref}

\end{document}